\documentclass[12pt]{amsart}
\usepackage{amsmath,amssymb,amsfonts,amsopn,mathrsfs,bm}
\usepackage{mathtools}
\usepackage{graphicx}
\usepackage{epstopdf}
\usepackage{cite}
\DeclareGraphicsExtensions{.png}
\usepackage{caption}
\usepackage{subcaption}
\usepackage{tikz,pgfplots}
\tikzstyle{every node}=[font=\small]



\setlength{\textwidth}{13,5cm}
\setlength{\textheight}{20cm}
\setlength{\oddsidemargin}{0pt}
\setlength{\evensidemargin}{0pt}
\setlength{\textwidth}{148 mm}   


\newcommand{\stft}{short-time Fourier transform}

\newtheorem{theorem}{Theorem}[section]

\newtheorem{corollary}[theorem]{Corollary}
\newtheorem{proposition}[theorem]{Proposition}
\newtheorem{definition}[theorem]{Definition}

\newtheorem{remark}[theorem]{Remark}

\theoremstyle{definition}

\newcommand{\beqa}{\begin{eqnarray*}}
\newcommand{\eeqa}{\end{eqnarray*}}

\newcommand{\field}[1]{\mathbb{#1}}
\newcommand{\bR}{\field{R}}        
\newcommand{\bN}{\field{N}}        
\newcommand{\bZ}{\field{Z}}        
        %
        %
\newcommand{\hlp}{{\tilde{L}^p_k}}



\def\cS{\mathcal{S}}

\def\cC{\mathcal{C}}

\def\a{\aleph}

\def\rd{\bR^d}

\def\rdd{{\bR^{2d}}}

\def\R{\Big))}

\def\<{\left<}
\def\>{\Big)>}

\def\inv{^{-1}}

\def\mv1{M_v^1}

\def\phas{(x,\xi)}
\def\mn{(m,n)}
\def\mn'{(m',n')}


\hyphenation{Cara-theo-do-ry}
\hyphenation{Dau-be-chies}
\hyphenation{Barg-mann}
\hyphenation{dis-tri-bu-ti-ons}
\hyphenation{pseu-do-dif-fe-ren-tial}
\hyphenation{ortho-normal}



\def\o{\eta}
\def\a{\alpha}

\def\R{\mathbb{R}}
\def\Ren{\mathbb{R}^d}

\def\tauhz0{\widehat{\mathcal{T}}^\hbar(z_0)}
\def\tauhz{\widehat{\mathcal{T}}^\hbar(z)}

\def\Sn2{S_{2}(L^{2}(\Ren))}
\def\S1{S_{1}(L^{2}(\Ren))}
\def\sig00{\sigma_{0,0}}





\begin{document}
\begin{abstract}
\medskip We consider a class of Schr\"odinger equations with time-dependent smooth magnetic and electric potentials having a growth at infinity at most linear and quadratic, respectively. We study the convergence in $L^p$ with loss of derivatives, $1<p<\infty$, of the time slicing approximations of the corresponding Feynman path integral. The results are completely sharp and hold for long time, where no smoothing effect is available.
The techniques are based on the decomposition and reconstruction of functions and operators with respect to certain wave packets in phase space. 
\medskip
\end{abstract}

\title{Convergence in $L^p$ for Feynman path integrals}

\subjclass{Primary 35S30; Secondary 47G30}
\author{Fabio Nicola}
\address{Dipartimento di Scienze Matematiche,
Politecnico di Torino, corso Duca degli Abruzzi 24, 10129 Torino,
Italy}
\email{fabio.nicola@polito.it}

\subjclass[2010]{81Q30, 35S30, 42B20, 46E35}
\keywords{Feynman path integrals,  Schr\"odinger equation, wave packets, modulation spaces, Fourier integral operators}
\maketitle

\section{Introduction}

Feynman path integrals were introduced in 1948 \cite{feynman,feynman-higgs} to provide a new formulation of Quantum Mechanics and nowadays represent a fundamental tool in most branches of modern Physics. In particular, R. Feynman suggested the construction of the integral kernel $K(t,s,x,y)$ of the Schr\"odinger propagator as a suggestive sum-over-histories, in the following sense. First of all the kernel $K(t,s,x,y)$ itself is interpreted as the probability amplitude for a particle to be at the point $x$ at time $t$ provided it was at $y$ at time $s$ ($x,y\in\rd$). Now, in the computation of this quantity every path $\gamma$ joining $y$ and $x$, therefore satisfying $\gamma(s)=y$, $\gamma(t)=x$, carries a contribution which is proportional to $e^{i\hbar^{-1}S[\gamma]}$, where $S[\gamma]$ is the action along the path $\gamma$:
\[
S[\gamma]=\int_s^t L(\gamma(\tau),\dot{\gamma}(\tau),\tau)\,d\tau,
\]
$L$ being the Lagrangian of the corresponding classical system.
The total amplitude is finally obtained by superposition and can be written symbolically as an integral
\[
K(t,s,x,y)=\int e^{i\hbar^{-1}S[\gamma]}\mathcal{D}[\gamma]
\]
over the space of paths satisfying the above boundary conditions, where $\mathcal{D}[\gamma]$ is a suitable measure on this space. Although such a measure actually does not exist in the measure theoretic sense \cite{cameron}, several rigorous justifications have been proposed by many authors and from different viewpoints (analytic continuation of the parabolic propagator, infinite dimensional oscillatory integrals, stochastic integrals, etc.). The literature is enormous and we refer to the books \cite{albeverio,mazzucchi,reed,schulman} and the references therein. Instead here we focus on the original approach of Feynman \cite{feynman,feynman-higgs} via time slicing approximations, which was carried on in a rigorous way in the papers\cite{fujiwara1,fujiwara2,fujiwara3,fujiwara4,fujiwara5,ichinose1,ichinose2,ichinose3,ichinose4,kitada1,kitada2,kumanogo,kumanogo2,kumanogo3,tsuchida,yajima}. Briefly one argues as follows. Suppose that for $|t-s|$ small enough there is only one classical path $\gamma$ (i.e.\ a path satisfying the Euler-Lagrange equations) satisfying the boundary condition $\gamma(s)=y$, $\gamma(t)=x$. Define then the action
\begin{equation}\label{azione}
S(t,s,x,y)=\int_s^t L(\gamma(\tau),\dot{\gamma}(\tau),\tau)\,d\tau,
\end{equation}
along that path.\par
Consider the operator $E^{(0)}(t,s)$ defined by
\begin{equation}\label{ezero}
E^{(0)}(t,s)f(x)=\frac{1}{(2\pi i (t-s) \hbar)^{d/2}} \int_{\rd} e^{i\hbar^{-1}S(t,s,x,y)} f(y)\, dy.
\end{equation}
The idea is that this operator should represent a good approximation of the actual propagator when $|t-s|$ is small (in fact, for the free particle $E^{(0)}(t,s)$ coincides with the exact propagator). In general one then considers a subdivision $\Omega:s=t_0<t_1<\ldots<t_L=t$ of the interval $[s,t]$ and the composition
\begin{equation}\label{zero0}
E^{(0)}(\Omega,t,s)=E^{(0)}(t,t_{L-1}) E^{(0)}(t_{L-1},t_{L-2})\ldots E^{(0)}(t_1,s),
\end{equation}
which has integral kernel
\begin{multline}\label{zero}
K^{(0)}(\Omega,t,s,x,y)
\\
=\prod_{j=1}^L \frac{1}{(2\pi i(t_j-t_{j-1})\hbar)^{d/2}}\int_{\R^{d(L-1)}} \exp\Big(i\hbar^{-1}\sum_{j=1}^L S(t_j,t_{j-1},x_j,x_{j-1})\Big) \prod_{j=1}^{L-1} dx_j,
\end{multline}
with $x=x_{L}$ and $y=x_{0}$; see Figure 1.\par
\begin{figure}[b]
\begin{tikzpicture}[yscale=0.5, xscale=1.2]
\draw[-stealth] (-0.2, 0) -- (5.5, 0) node[above] {}; 
\draw[-stealth] (0, -0.4) -- (0, 5.5) node[below right] {}; 
\draw[] (1,0) node[below]{$s=t_0$} -- (1,5);
\draw[] (2,0) node[below]{$t_1$} -- (2,5);
\draw[] (3,0) node[below]{$t_2$} -- (3,5);
\draw[] (4,0) node[below]{$$} -- (4,5);
\draw[] (5,0) node[below]{$t=t_L$} -- (5,5);
\draw[very thick] (1,1) to [out=80,in=200] (2,2);
\draw[very thick] (2,2) to [out=70,in=240] (3,4);
\draw[very thick] (3,4) to [out=-70,in=145] (4,1);
\draw[very thick] (4,1) to [out=80,in=195] (5,3);
\draw[dashed] (0, 1) node[left]{$y=x_0$} -- (1, 1);
\draw[dashed] (0, 2) node[left]{$x_1$} -- (2, 2);
\draw[dashed] (0, 3) node[left]{$x=x_L$} -- (5, 3);
\end{tikzpicture}
\caption{A piecewise classical path in spacetime.}
\end{figure}
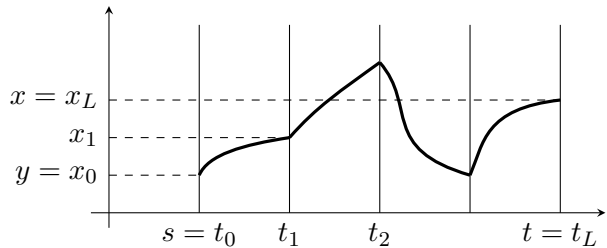

Feynman stated that the operator $E^{(0)}(\Omega,t,s)$ converges (in some sense) to the actual propagator as $\omega(\Omega):=\sup_{j=1,\ldots,L}|t_j-t_{j-1}|\to 0$. In view of the semiclassical approximation one also introduces higher order parametrices $E^{(N)}(t,s)$, $N=1,2\,\ldots$, and the corresponding time slicing approximations $E^{(N)}(\Omega,t,s)$ (see Section 2 below), and set the same convergence problem.  In the present paper we investigate this issue for the uniform topology of bounded operators on $L^p$-based Sobolev spaces.\par 
We consider the Schr\"odinger equation
\begin{equation}\label{equazione}
i\hbar \partial_t u=-\frac{1}{2}\hbar^2\Delta u+V(t,x)u
\end{equation}
where $0<\hbar\leq 1$ and the potential $V(t,x)$, $t\in \R$, $x\in\rd$, satisfies the following hypothesis.\par\medskip
{\bf Assumption (A)} 
{\it $V(t,x)$ is a real function of $(t,x)\in \R\times\rd$ with $\partial^\alpha_x V(t,x)$ continuous in $(t,x)\in \R\times \rd$, for every $\alpha\in\bN^d$ and satisfying
\[
|\partial^\alpha_x V(t,x)|\leq C_\alpha,\quad|\alpha|\geq 2,\ (t,x)\in \R\times \rd.
\]}
\par\smallskip
Under this hypothesis the exact propagator $U(t,s)$ was constructed in \cite{fujiwara1,fujiwara2}: it turns out that, {\it for $|t-s|$ small enough}, $U(t,s)$ is an oscillatory integral operator of the form 
\begin{equation}\label{cinque}
U(t,s)f(x)=\frac{1}{(2\pi i (t-s) \hbar)^{d/2}} \int_{\rd} e^{i\hbar^{-1}S(t,s,x,y)} b(\hbar,t,s,x,y)f(y)\, dy
\end{equation}
for some amplitude $b$ such that $\partial^\alpha_x\partial^\beta_y b(\hbar,t,s,x,y)$ is bounded, for every $\alpha,\beta\in\bN^d$, $0<\hbar\leq1$. Instead, for $|t-s|$ large, $U(t,s)$ can be written as the composition of a finite number of oscillatory integral operators as above.\par Moreover $E^{(N)}(\Omega,t,s)$ (for fixed $N$) was also shown to converge in a quite strong topology to the actual propagator as $\omega(\Omega)\to0$ when $t-s$ is small enough, which implies the convergence in the uniform topology of bounded operators on $L^2(\rd)$. While we are not aware of similar results in $L^p(\rd)$ for $p\not=2$, it is clear that in that case the propagator is in general not even bounded on $L^p$ but a loss of derivatives occurs. For example, for the free propagator we have
\[
\| e^{i\hbar \Delta}f\|_{L^p}\leq C\| (1-\hbar\Delta)^{k/2}f\|_{L^p},\quad k=2d|1/2-1/p|,\ 1<p<\infty.
\]
This estimate was proved in \cite{miyachi} in the case $\hbar=1$, whereas the general case follows at once by a scaling argument. Generalizations to different classes of potentials were proved, except for the endpoint, in \cite{Blu07,CCO,JN95,JN94,nicola}, exploiting in an essential way some smoothing effect (Gaussian estimates for the heat kernel). The above loss of derivatives is optimal \cite{brenner} and is explained by the fact that the characteristic manifold of the Schr\"odinger operator, that is a paraboloid, has non-vanishing Gaussian curvature. For hyperbolic equations \cite{ferreira,seeger-sogge-stein,stein93} we have instead the loss $k=2(d-1)|1/2-1/p|$, because there is one flat direction in that case. \par
Motivated by this model, we will focus on the convergence of the time slicing approximations in the following Sobolev spaces.\par
 For $1<p<\infty$, $k\in\R$, define the space
\begin{equation}\label{spaziohlp}
\hlp=\hlp(\rd)=\{f\in\cS'(\rd): \|f\|_{\hlp}=\|(1-\hbar\Delta)^{k/2}f\|_{L^p}<\infty\}.
\end{equation}
Hence, $\hlp$ is the usual Sobolev space as a vector space, but with a rescaled norm at the Planck scale. Here is our main result. 
\begin{theorem}\label{mainteo}
Assume the condition in Assumption {\rm (A)}. Let $1<p<\infty$, $k=k_p=2d|1/2-1/p|$. 
\begin{itemize}
\item[a)] For every $T>0$ there exists a constant $C(T)>0$ such that, for all $f\in\cS(\rd)$, $|s-t|\leq T$, $0<\hbar\leq 1$,
\begin{equation}\label{uno}
\|U(t,s)f\|_{L^p}\leq C(T) \|f\|_{\hlp},\quad 1<p\leq 2,
\end{equation}
\begin{equation}\label{due}
\|U(t,s)f\|_{\tilde{L}^p_{-k}}\leq C(T) \|f\|_{L^p},\quad 2\leq p<\infty.
\end{equation}
\item[b)] For every $T>0$, $N=0,1,2,\ldots$,  there exists a constant $C(T)>0$ such that, for $0<t-s\leq T$ and any subdivision $\Omega$ of the interval $[s,t]$, $f\in\cS(\rd)$, $0<\hbar\leq 1$, we have 
\begin{multline}\label{tre}
\|\big(E^{(N)}(\Omega,t,s)-U(t,s)\big)f\|_{L^p}
\leq C(T)\hbar^N \omega(\Omega)^{N+1}|t-s| \|f\|_{\hlp},\ 1<p\leq 2,
\end{multline}
\begin{multline}\label{quattro}
\| \big(E^{(N)}(\Omega,t,s)-U(t,s)\big)f\|_{\tilde{L}^p_{-k}}
\leq C(T)\hbar^N \omega(\Omega)^{N+1}|t-s|  \|f\|_{L^p},\ 2\leq p<\infty.
\end{multline}
\end{itemize}
\end{theorem}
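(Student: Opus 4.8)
The plan is to run the whole argument in the $\hbar$-rescaled modulation spaces $M^p_\hbar$ adapted to the Planck scale: there the propagator and all the parametrices are uniformly bounded, and sharp embeddings with the scaled Sobolev spaces $\hlp$ are available, so that the loss of derivatives $k_p$ will come \emph{entirely} from those embeddings and not from the dynamics.

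\smallskip\noindent\emph{Step 1: reduction to one short step in $M^p_\hbar$.} Recall from Section~2 that, for $|t-s|$ small, $U(t,s)$ and each $E^{(N)}(t,s)$ are oscillatory integral operators of type I sharing the phase $S(t,s,x,y)$ (more precisely its mixed-coordinate generating function), which is \emph{tame} --- bounded derivatives of order $\geq 2$, uniformly nondegenerate mixed Hessian, uniformly in $\hbar\in(0,1]$ and $t,s$ --- and is associated with the classical Hamiltonian flow $\chi_{t,s}$, a tame symplectomorphism of $\rdd$; the amplitudes have bounded derivatives uniformly in the parameters, and $U(t,s)=E^{(N)}(t,s)+R^{(N)}(t,s)$, where $R^{(N)}(t,s)$ is a type I operator with the same tame phase whose amplitude is $O(\hbar^N|t-s|^{N+2})$ together with all its derivatives. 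Using the group law $U(t,s)=U(t,r)U(r,s)$, the analogous composition rule for $E^{(N)}(\Omega,\cdot)$, and the telescoping identity
\[
E^{(N)}(\Omega,t,s)-U(t,s)=\sum_{j=1}^{L}U(t,t_j)\,R^{(N)}(t_j,t_{j-1})\,E^{(N)}(\Omega',t_{j-1},s),
\]
together with $\sum_j(t_j-t_{j-1})^{N+2}\leq\omega(\Omega)^{N+1}|t-s|$ and the bound $e^{C|t-s|}$ for the remaining factors, parts a) and b) --- once expressed in $M^p_\hbar$ --- reduce to the single-step estimates (stability forms, the leading $1$ reflecting closeness to the identity)
\[
\|U(t,s)\|_{M^p_\hbar\to M^p_\hbar}\leq 1+C|t-s|,\qquad \|E^{(N)}(t,s)\|_{M^p_\hbar\to M^p_\hbar}\leq 1+C|t-s|,
\]
\[
\|R^{(N)}(t,s)\|_{M^p_\hbar\to M^p_\hbar}\leq C\hbar^N|t-s|^{N+2},
\]
uniformly in $\hbar\in(0,1]$ and $1\leq p\leq\infty$, together with the remark that for $|t-s|\leq T$ the operator $U(t,s)$ is a composition of boundedly many short-time propagators, so $\|U(t,s)\|_{M^p_\hbar\to M^p_\hbar}\leq C(T)$.

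\smallskip\noindent\emph{Step 2: the wave-packet estimate.} Fix a Gaussian window $g^\hbar$ and time-frequency shifts $\pi_\hbar(\lambda)$ at the scale $\sqrt\hbar$, so that $M^p_\hbar$ is described by the $\ell^p$-summability over a lattice of the Gabor coefficients $\langle f,\pi_\hbar(\lambda)g^\hbar\rangle$, with $\hbar$-uniform frame bounds. For an operator $T$ of the above type, with canonical transformation $\chi$ and amplitude bounded (for sufficiently many derivatives) by $A$, I will prove the almost-diagonalization estimate
\[
\big|\langle T\pi_\hbar(\mu)g^\hbar,\pi_\hbar(\lambda)g^\hbar\rangle\big|\leq C_M\,A\,\big\langle\hbar^{-1/2}(\lambda-\chi(\mu))\big\rangle^{-2M}\qquad\text{for every }M,
\]
uniformly in $\hbar$. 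This is the technical core: one writes the matrix entry as an oscillatory integral whose full phase is stationary precisely at $\lambda=\chi(\mu)$, and integrates by parts (non-stationary phase) using the $L^\infty$-bounds on the derivatives of $S$ of order $\geq 2$ and on those of the amplitude and of the Gaussians; the $\sqrt\hbar$-scaling is exactly what makes the bound $\hbar$-uniform.

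\smallskip\noindent\emph{Step 3: from the Gabor matrix to $M^p_\hbar$, and conclusion.} After discretization, $T$ acts on Gabor coefficients with a matrix dominated by $K(\lambda,\mu)=C_M A\,\langle\hbar^{-1/2}(\lambda-\chi(\mu))\rangle^{-2M}$; since $\chi$ is a tame, hence uniformly bi-Lipschitz, symplectomorphism, both $\sup_\lambda\sum_\mu K(\lambda,\mu)$ and $\sup_\mu\sum_\lambda K(\lambda,\mu)$ are $\leq C_d A$ uniformly in $\hbar$, so Schur's test and interpolation give $\|T\|_{M^p_\hbar\to M^p_\hbar}\leq C_d A$ for all $1\leq p\leq\infty$. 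Applied to $U(t,s)$, $E^{(N)}(t,s)$ and $R^{(N)}(t,s)$ this gives the single-step estimates of Step~1 (for $|t-s|\leq T$ one also uses that a composition of finitely many such operators is again of this type, composition of tame symplectomorphisms being tame). Finally, invoke the sharp Sobolev--modulation embeddings with $\hbar$-uniform constants: for $1<p\leq 2$, $\hlp\hookrightarrow M^p_\hbar\hookrightarrow L^p(\rd)$ with $k=k_p=2d|1/2-1/p|$ the optimal exponent, and for $2\leq p<\infty$ the dual chain $L^p(\rd)\hookrightarrow M^p_\hbar\hookrightarrow\tilde{L}^p_{-k}(\rd)$; composing these inclusions with the $M^p_\hbar$-bounds yields \eqref{uno}--\eqref{quattro}. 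The main obstacle is the $\hbar$-uniform wave-packet estimate of Step~2, together with the bookkeeping ensuring that the remainder $R^{(N)}$ carries its gain $\hbar^N|t-s|^{N+2}$ \emph{in front of the entire rapidly decaying matrix}, so that this factor survives both the passage to the operator norm and the telescoped sum; a secondary but pervasive issue is uniformity in $\hbar\in(0,1]$, which is what forces the use of $\sqrt\hbar$-scaled wave packets and modulation spaces, while the sharpness of $k_p$ rests on the known optimal embeddings between Sobolev and modulation spaces.
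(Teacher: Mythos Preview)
Your overall strategy --- lift everything to the rescaled modulation spaces $M^p_\hbar$, prove a wave-packet almost-diagonalization estimate for the short-time oscillatory integral operators, apply Schur's test to get $M^p_\hbar$-boundedness, and then transfer back to $L^p$ via the sharp Sobolev--modulation embeddings --- is exactly the paper's approach, and your treatment of part a) agrees with it.

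The difference, and the gap, is in part b). You telescope
\[
E^{(N)}(\Omega,t,s)-U(t,s)=\sum_{j=1}^{L}U(t,t_j)\,R^{(N)}(t_j,t_{j-1})\,E^{(N)}(\Omega',t_{j-1},s),
\]
and to bound the last factor uniformly in $L$ you invoke the ``stability form'' $\|E^{(N)}(t,s)\|_{M^p_\hbar\to M^p_\hbar}\leq 1+C|t-s|$, so that $\prod_i(1+C(t_i-t_{i-1}))\leq e^{C|t-s|}$. But your wave-packet estimate plus Schur's test only yields a \emph{uniform} constant bound $C_d$ (possibly $>1$), not closeness to $1$; nothing in your outline forces the leading $1$. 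Without it the partial products $E^{(N)}(\Omega',t_{j-1},s)$ are only controlled by $C_d^{\,j-1}$, which blows up as $\omega(\Omega)\to 0$. Proving the stability form would amount to showing $\|E^{(N)}(t,s)-I\|_{M^p_\hbar\to M^p_\hbar}\leq C|t-s|$, and this does not follow from the generator being unbounded nor from the kernel estimate you wrote (the phase difference $\Phi(t,s,x,\eta)-x\eta$ contains the quadratic term $\tfrac12(t-s)|\eta|^2$, which obstructs a direct $O(|t-s|)$ amplitude argument).

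The paper sidesteps this entirely. It works at the level of the $FIO_\hbar(\chi)$ seminorms and expands $E^{(N)}(\Omega,t,s)-U(t,s)=\prod_j\bigl(U(t_j,t_{j-1})+R^{(N)}(t_j,t_{j-1})\bigr)-\prod_j U(t_j,t_{j-1})$ into ordered products each containing at least one $R^{(N)}$-factor. The crucial observation is that consecutive $U$-factors \emph{coalesce} via the group law $U(a,b)U(b,c)=U(a,c)$, so the number of compositions one actually performs is controlled by the number of $R^{(N)}$-factors present, not by $L$. Each $R^{(N)}$-factor carries the small weight $\hbar^N|t_j-t_{j-1}|^{N+2}$, which absorbs the (bounded) composition constants; the resulting combinatorial sum is then handled by Fujiwara's Lemma 3.2. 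This mechanism replaces your unjustified stability form and is the missing ingredient in your argument.
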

We will show in the last section that these are, in general, all possible $L^p$-estimates for Schr\"odinger propagators of the above type. Extensions to the case of magnetic fields (cf. \cite{yajima,fujiwara5,tsuchida}) or even rough potentials will be discussed below.\par
Let us now say a few words about the strategy of the proof. There are two main issues. First, for {\it long time} the propagator does not have the integral representation \eqref{cinque} anymore and in general no smoothing effect is available. For example, for the quadratic potential $V(t,x)=|x|^2/2$ (and $\hbar=1$ for simplicity), the propagator $U(t,0)$ at time $t=k\pi$, $k\in\bZ$, has distribution kernel $e^{-ik\pi/2}\delta((-1)^k x-y)$ (see e.g.\ \cite{rod}). Secondly, the space of bounded operators $\hlp\to L^p$ is of course not an algebra, and hence it does not behave nicely with respect to compositions as those in \eqref{zero0}. To overcome these problems we lift the whole analysis to the phase space, using ideas originated by the seminal papers \cite{fefferman,marzuola,kt,tataru} and developed by the author et al.\ in \cite{cnr,cgnr,cnr1}. The key tool is a space of operators highly sparse with respect to certain wave packet decompositions. To be precise, consider a fixed Schwartz function $g\not\equiv0$, and the wave packet 
\[
\pi(z)g(y)=e^{i y\xi} g(y-x),\quad z=(x,\xi)\in\rd\times\rd
\]
which is highly concentrated, in phase space, near $(x,\xi)$. We will consider operators $T$ satisfying (when $\hbar=1$)
\[
|\langle T\pi(z)g,\pi(w)g\rangle|\leq C_m (1+|w-\chi(z)|)^{-m},\quad z,w\in\rdd,
\]
for every $m\in\bN$ and for some canonical transformation $\chi:\rdd\to\rdd$.
Composition of operators corresponds to the composition of the corresponding maps $\chi$'s. Moreover these operators are trivially bounded on the so-called modulation spaces $M^p$, $1\leq p\leq\infty$ (see \cite{deGosson,book} and Section 3 below), which measure the phase space concentration of a function. We finally come back to $L^p$ spaces by means of a nontrivial embedding recently obtained in  \cite{ks}. \par
 
While {\it endpoint} continuity results for $p\not=2$ typically require a painful analysis in some version of the Hardy space (see e.g. \cite{fefferman,miyachi,seeger-sogge-stein,stein93}), one of the novelties of this approach is that such technicalities are hidden under the above mentioned embedding and the analysis is focused on how the operators move the wave packets in phase space, which is much smoother and closer to intuition. For this reason and in view of the results in \cite{cgnr2,cnr10,marzuola}, we believe that the present approach could also extend to the case of rough potentials (and degenerate operators). This would provide an interesting low regularity version of the results of \cite{fujiwara2,yajima}. We plan to investigate these issues in a subsequent paper (see also the last section for some partial result in this connection).\par
Briefly, the paper is organized as follows. In Section 2 we recall the short-time analysis of \cite{fujiwara1,fujiwara2,yajima}. Section 3 is devoted to the phase space methods used in the sequel and, in particular, to the class of operators, closed under composition, mentioned above. In Section 4 we will show that the short-time propagators $U(t,s)$ and approximations $E^{(N)}(\Omega,t,s)$ belong to such operator class. Theorem \ref{mainteo} will be proved in Section 5. Section 6 treats a generalization concerning the convergence of  $E^N(\Omega,t,s)$ with $N\geq 1$ in the presence of a magnetic field satisfying the same assumptions as in \cite{yajima}. Finally in Section 7 we show the sharpness of the results and provide an extension of the first part of Theorem \ref{mainteo} to a class of rough potentials.
  
\section{The Schr\"odinger flow: short time analysis}
In this section we recall the main results obtained in \cite{fujiwara1,fujiwara2}; see also \cite{yajima}. We will always assume the hypothesis in Assumption (A) in Introduction.\par
First of all we fix the following notation. We denote by $S^0_{0,0}$ the class of smooth function $a(x,y)$ in $\rdd$ which are bounded together with their derivatives, endowed with the seminorms
\[
\|a\|_m=\sup_{|\alpha|+|\beta|\leq m}\|\partial^\alpha_x\partial^\beta_y a\|_{L^\infty(\rdd)},\quad m\in\bN.
\]
Consider now the Hamiltonian 
\[
H(t,x,\xi)=\frac{1}{2}|\xi|^2+V(t,x).
\] 
Denote by $(x(t,s,y,\eta),\xi(t,s,y,\eta))$ ($s,t\in\R$, $y,\eta\in\rd$), the solution of the corresponding system 
\[
\dot{x}=\xi,\quad \dot{\xi}=-\nabla_x V(t,x)
\]
with initial condition at time $t=s$ given by $x(s,s,y,\eta)=y$, $\xi(s,s,y,\eta)=\eta$. The flow
\begin{equation}\label{chi}
(x(t,s,y,\eta),\xi(t,s,y,\eta))=\chi(t,s)(y,\eta)
\end{equation}
defines a smooth canonical transformation $\chi(t,s):\rdd\to\rdd$ satisfying for every $T>0$ the estimates 
\begin{equation}\label{stimeflusso}
|\partial^\alpha_y\partial^\beta_\eta x(t,s,y,\eta)|+|\partial^\alpha_y\partial^\beta_\eta \xi(t,s,y,\eta)|\leq C_{\alpha,\beta} (T)\quad y,\eta\in\rd,
\end{equation}
for some constant $C(T)>0$, if $|t-s|\leq T$ (see \cite[Proposition 1.1]{fujiwara2}).\par
Moreover, there exists $\delta>0$ such that for $|t-s|<\delta$ and every $x,y\in\rd$, there exists only one solution such that $x(s)=y$, $x(t)=x$. By computing the action along this path, as in \eqref{azione}, we define the generating function $S(t,s,x,y)$ for $|t-s|<\delta$. It satisfies the estimates
\begin{equation}\label{pre1}
|t-s|\,|\partial^\alpha_x\partial^\beta_y S(t,s,x,y)|\leq C_{\alpha,\beta},\quad |\alpha|+|\beta|\geq 2
\end{equation}
and
\begin{equation}\label{pre2}
|t-s|\,\Big|{\rm det}\,\Big(\frac{\partial^2 S(t,s,x,y)}{\partial y^2}\Big)\Big|\geq \tilde{\delta},
\end{equation}
for some $\tilde{\delta}>0$ and every $x,y\in\rd$, always for $|t-s|<\delta$.\par
We now come to the construction of the parametrices. For $|t-s|<\delta$, define the operator $E^{(0)}(t,s)$ for $|t-s|<\delta$ as in \eqref{ezero}.\par
 For $N=1,2,\ldots$ we define $E^{(N)}(t,s)$ as
\begin{equation}
E^{(N)}(t,s)f(x)=\frac{1}{(2\pi i (t-s) \hbar)^{d/2}} \int_{\rd} e^{i\hbar^{-1}S(t,s,x,y)} e_N(\hbar,t,s,x,y) f(y)\, dy,
\end{equation}
with $e_N(\hbar,t,s,x,y)=\sum_{k=1}^N (i\hbar)^{k-1} a_k(t,s,x,y)$, where the amplitudes $a_k$ satisfy the transport equations
\[\label{trasporto}
\frac{\partial a_k}{\partial t}+\sum_{j=1}^d\frac{\partial S}{\partial x_j}\frac{\partial a_k}{\partial x_j}
+\frac{1}{2}\Big(\Delta_x S-\frac{d}{t-s}\Big)a_k=\frac{1}{2}\Delta_x a_{k-1},
\]
with $a_{0}(t,s,x,y)\equiv0$ and the initial conditions $a_1(s,s,x,y)=1$ and $a_k(s,s,x,y)=0$ for $k=1,2,\ldots$\par 
We have, for every $m\in\bN$,
\begin{equation}\label{quattro0}
\|a_k(t,s,\cdot,\cdot)\|_m\leq C_m\quad {\rm for}\ |t-s|<\delta.
\end{equation}
The operators $E^{(N)}(t,s)$ are parametrices in the sense that, for $N=0,1,\ldots,$
\[
\big(i\hbar\partial_t+\frac{1}{2}\hbar^2\Delta-V(t,x)\big) E^{(N)}(t,s)f=G^{(N)}(t,s)f
\]
with
\begin{equation}\label{defgn}
G^{(N)}(t,s)f =\frac{1}{(2\pi i (t-s) \hbar)^{d/2}} \int_{\rd} e^{i\hbar^{-1}S(t,s,x,y)} g_N(\hbar,t,s,x,y) f(y)\, dy,
\end{equation}
where $g_N$ satisfies the estimates \cite[Propositions 1.5, 1.6]{fujiwara2}
\begin{equation}\label{stimeresto}
\|g_N(\hbar,t,s,\cdot,\cdot)\|_m\leq C_m\hbar^{N+1}|t-s|^{N+1}.
\end{equation}

Moreover, for a subdivision $\Omega:s=t_0<t_1<\ldots<t_L=t$ we define
\begin{equation}\label{zero0}
E^{(N)}(\Omega,t,s)=E^{(N)}(t,t_{L-1}) E^{(N)}(t_{L-1},t_{L-2})\ldots E^{(N)}(t_1,s).
\end{equation}
As already observed in the introduction for $|t-s|<\delta$ the propagator has the form
\begin{equation}\label{cinque0}
U(t,s)f(x)=\frac{1}{(2\pi i (t-s) \hbar)^{d/2}} \int_{\rd} e^{i\hbar^{-1}S(t,s,x,y)} b(\hbar,t,s,x,y)f(y)\, dy
\end{equation}
 for an amplitude $b$ such that $\partial^\alpha_x\partial^\beta_y b(\hbar,t,s,x,y)$ is of class $C^1$ in $t,s,x,y$ and satisfying
\begin{equation}\label{cinque1}
\|b(\hbar, t,s,\cdot,\cdot)\|_m\leq C_m
\end{equation}
for $|t-s|<\delta$, $0<\hbar\leq 1$, $m\in\bN$.\par
When $|t-s|$ is large, $U(t,s)$ is the composition of a finite number of such oscillatory integral operators. 

\section{Phase space methods}
We recall here some basic facts of phase space analysis used in the sequel; see \cite{book} and especially \cite{deGosson} for a Mathematical Physics perspective. Then we report on some results from \cite{cgnr,cnr1,tataru}.\par

\subsection{Phase space concepts}
For a point $z=\phas\in\rd\times\rd$ and a function $f$ on $\rd$,
we denote the phase-space
shifts by $$\pi(z)f(y)= e^{i y\xi} f(y-x),\quad
\mbox{where}\quad y\xi=y\cdot\xi=\sum_{k=1}^d y_k\xi_k \, .$$

The \stft\, (STFT), or Bargmann transform, or FBI transform, of a function/distribution $f$ on $\rd$ with respect to a Schwartz window function $g\in\cS(\rd)\setminus\{0\}$ is defined by

\begin{equation}\label{stft}
V_g f(z)=\langle f,\pi(z)g
\rangle= \int _{\rd} f(y)\overline{g(y-x)}e^{- iy\xi}\,dy,
\end{equation}
for $z=(x,\xi) \in\rd\times \rd$.
Its adjoint is given by
\[
V^\ast_g F(y)=\int_{\rdd} F(z) \pi(z) g(y)\, dz,\quad y\in\rd,
\]
and we have the inversion formula (see e.g.
 \cite[Corollary 3.2.3]{book})
\begin{equation}\label{inversione}
f=\frac{1}{(2\pi)^{d}\|g\|^2_{L^2}}V_g^\ast V_g f=
\frac{1}{(2\pi)^{d}\|g\|^2_{L^2}}\int_{\rdd} V_g f(z)\pi(z)g\,dz.
\end{equation}

For $1\leq p\leq\infty$ the modulation space $M ^{p}=M^p(\rd)$ is the space of
distributions $f\in\cS'(\rd)$ such that their STFTs belong to the
space $L^{p}(\rdd) $ with  norm
\[
\|f\|_{M^p(\R^d)}:=\|V_g f\|_{L^{p}(\R^{2d})}.
\]
This definition  does not depend on the choice of the window $g\in \cS (\rd ), g \neq 0$, and different windows yield equivalent norms on $M^p$~\cite[Theorem 11.3.7]{book}.
We also have 
\[
V_g: M^p(\rd)\to L^p(\rdd),\quad V^\ast_g: L^p(\rdd)\to M^p(\rd)
\]
as bounded operators.\par
The following embedding results (see \cite[Theorems 1.3, 1.4]{ks} and also \cite{tomita}) are crucial in the sequel.\par
  For $1<p<\infty$, $k\in\R$, let $L^p_k=L^p_k(\rd)$ be the space of
distributions $f\in\cS'(\rd)$ such that $(1-\Delta)^{k/2} f\in L^p$, with the norm
\[
\|f\|_{L^p_k}=\|(1-\Delta)^{k/2} f\|_{L^p}.
\]
\begin{theorem}\label{proimmersioni} Let $1<p<\infty$ and $k=k_p=2d|1/2-1/p|$. Then we have 
\begin{equation}\label{immersioni}
M^p\hookrightarrow L^p,\quad L^p_k\hookrightarrow M^p,\quad 1<p\leq 2
\end{equation}
as well as 
\[
L^p\hookrightarrow M^p,\quad M^p \hookrightarrow L^p_{-k}\quad 2\leq p<\infty.
\]
\end{theorem}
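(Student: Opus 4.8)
\medskip
\noindent\emph{Strategy.} I would first halve the work by duality and then treat the two surviving cases by different means. Since $(M^p)^\ast=M^{p'}$ \cite{book} and $(L^p_k)^\ast=L^{p'}_{-k}$ for $1<p<\infty$ (both standard, with the usual $L^2$ pairing), and $M^p$ is dense in $L^p$, transposition turns each embedding into its ``opposite'': $M^p\hookrightarrow L^p$ for $1<p\le2$ is equivalent to $L^{p'}\hookrightarrow M^{p'}$ for $2\le p'<\infty$, and $L^p_{k_p}\hookrightarrow M^p$ for $1<p\le2$ is equivalent to $M^{p'}\hookrightarrow L^{p'}_{-k_{p'}}$ for $2\le p'<\infty$ (recall $k_p=k_{p'}$). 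Hence it suffices to prove, for $1<p\le2$: (i) $M^p\hookrightarrow L^p$, and (ii) $L^p_{k_p}\hookrightarrow M^p$.

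\medskip
\noindent\emph{The easy half.} By the inversion formula \eqref{inversione}, $f$ is a fixed multiple of $V_g^\ast V_gf$, so it is enough that $V_g\colon M^p\to L^p(\rdd)$ (immediate from the definition of $M^p$) and $V_g^\ast\colon L^p(\rdd)\to L^p(\rd)$ be bounded. The latter holds for $1\le p\le2$ by interpolation between the two elementary bounds $V_g^\ast\colon L^1(\rdd)\to L^1(\rd)$ — which follows from $\int_{\rd}|\pi(z)g(y)|\,dy=\|g\|_{L^1}$ for every $z$ — and $V_g^\ast\colon L^2(\rdd)\to L^2(\rd)$, the adjoint of the bounded operator $V_g\colon L^2(\rd)\to L^2(\rdd)$.

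\medskip
\noindent\emph{The substantial half.} The case $p=2$ is trivial ($M^2=L^2$), so let $1<p<2$; then $k_p=2d(1/p-1/2)\in(0,d)$. Rewriting \eqref{stft} as $V_gf(x,\xi)=\widehat{f\,\overline{g(\cdot-x)}}(\xi)$ and choosing $g\in\cS(\rd)$, the definition of $M^p$ becomes
\[
\|f\|_{M^p}^p=\int_{\rd}\big\|\,\widehat{f\,\overline{g(\cdot-x)}}\,\big\|_{L^p(\rd)}^p\,dx .
\]
The heart of the proof is the window-free inequality
\begin{equation}\label{starineq}
\|\hat h\|_{L^p(\rd)}\le C\,\|(1-\Delta)^{k_p/2}h\|_{L^p(\rd)},\qquad 1<p<2,
\end{equation}
valid for every $h$. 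I would prove \eqref{starineq} by writing $\hat h(\xi)=\langle\xi\rangle^{-k_p}\,\widehat{(1-\Delta)^{k_p/2}h}(\xi)$ and combining: (a) $\langle\xi\rangle^{-k_p}\in L^{d/k_p,\infty}(\rd)$; (b) the Lorentz‑refined Hausdorff–Young inequality $\cF\colon L^p(\rd)\to L^{p',p}(\rd)$, obtained by real interpolation of $\cF\colon L^1\to L^\infty$ and $\cF\colon L^2\to L^2$; (c) O'Neil's inequality $\|FG\|_{L^p}\lesssim\|F\|_{L^{d/k_p,\infty}}\|G\|_{L^{p',p}}$, applicable since $k_p/d+1/p'=1/p$. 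Inserting \eqref{starineq} with $h=f\,\overline{g(\cdot-x)}$ and integrating in $x$ reduces the claim to
\[
\int_{\rd}\big\|(1-\Delta)^{k_p/2}\big(f\,\overline{g(\cdot-x)}\big)\big\|_{L^p}^p\,dx\le C\,\|(1-\Delta)^{k_p/2}f\|_{L^p}^p ,
\]
which follows by commuting $(1-\Delta)^{k_p/2}$ past the Schwartz window via the pseudodifferential calculus: the principal term $\overline{g(\cdot-x)}\,(1-\Delta)^{k_p/2}f$ alone contributes $\|g\|_{L^p}^p\|(1-\Delta)^{k_p/2}f\|_{L^p}^p$ after the $x$-integration, and every lower-order term of the asymptotic expansion has the form (multiplication by a Schwartz function of $y-x$)$\,\circ\,(1-\Delta)^{-j/2}$ with $j\ge1$, hence is absorbed by the same computation together with the $L^p$-boundedness of $(1-\Delta)^{-j/2}$; the remainder is smoothing.

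\medskip
\noindent\emph{Main obstacle.} Everything above is routine \emph{except} that the result holds with the \emph{sharp} exponent $k=k_p$, which is exactly the content of \eqref{starineq}. The naive argument — plain Hausdorff–Young $\cF\colon L^p\to L^{p'}$ followed by Hölder with $\langle\xi\rangle^{-s}\in L^{d/s}$ — forces $s>k_p$ and fails at the endpoint, since $\langle\xi\rangle^{-k_p}$ lies in weak $L^{d/k_p}$ but not in $L^{d/k_p}$. Passing to Lorentz spaces (O'Neil's inequality together with the refined mapping $\cF\colon L^p\to L^{p',p}$) is precisely what restores the endpoint, and this is the one delicate point of the whole argument. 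Alternatively, the second half could be run on the discrete Wiener-amalgam norm $\|f\|_{M^p}^p\asymp\sum_{n\in\zd}\|\widehat{\psi(\cdot-n)f}\|_{L^p}^p$ associated with a smooth partition of unity $\{\psi(\cdot-n)\}_{n\in\zd}$, trading the $x$-integral for a sum over unit cubes and the above commutation for its summed analogue.
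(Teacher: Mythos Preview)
The paper does not actually prove this theorem: it is quoted as a known result from Kobayashi--Sugimoto \cite{ks} (with \cite{tomita}), so there is no ``paper's own proof'' to compare against. Your proposal therefore supplies strictly more than the paper does, and the overall strategy is sound. The duality reduction is correct (using $(M^p)^\ast=M^{p'}$ and $(L^p_k)^\ast=L^{p'}_{-k}$), and the ``easy half'' $M^p\hookrightarrow L^p$ for $1<p\le2$ via $V_g^\ast:L^p(\rdd)\to L^p(\rd)$ interpolated between $p=1$ and $p=2$ is clean and complete. The key inequality \eqref{starineq} is exactly the (Bessel version of the) Hardy--Littlewood--Paley inequality --- the paper itself invokes it in Section~7.1 via \cite[Theorem~1.4.1]{bl} --- and your Lorentz-space proof (refined Hausdorff--Young $\cF:L^p\to L^{p',p}$ plus O'Neil with $\langle\xi\rangle^{-k_p}\in L^{d/k_p,\infty}$, the exponent relation $k_p/d+1/p'=1/p$ and $k_p<d$ both holding for $1<p<2$) is correct.

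The one place that needs tightening is the final commutation step $\int_{\rd}\|(1-\Delta)^{k_p/2}(f\,\overline{g(\cdot-x)})\|_{L^p}^p\,dx\lesssim\|f\|_{L^p_{k_p}}^p$. Your description of the lower-order terms is slightly off (the $|\alpha|$-th term in the expansion of $\langle D\rangle^{k}\,M_{g(\cdot-x)}$ is $M_{(D^\alpha g)(\cdot-x)}\,(\partial_\xi^\alpha\langle\xi\rangle^{k})(D)$, of order $k-|\alpha|$, not $-|\alpha|$; one then factors $(\partial_\xi^\alpha\langle\xi\rangle^k)\langle\xi\rangle^{-k}\in S^{-|\alpha|}_{1,0}$ as an $L^p$-bounded Mikhlin multiplier times $\langle D\rangle^k$). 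More importantly, the remainder after finitely many terms still depends on the parameter $x$, and you must check that its $L^p$ norm to the $p$-th power is \emph{integrable} in $x$, not merely bounded; ``the remainder is smoothing'' does not by itself give this. A painless fix that avoids the remainder issue altogether: the linear map $Tf=(x\mapsto f\,\overline{g(\cdot-x)})$ is bounded $L^p_s(\rd)\to L^p(\rd_x;L^p_s(\rd_y))$ trivially for $s=0$ and, by the ordinary Leibniz rule, for every integer $s=N$; complex interpolation in $s$ (both scales interpolate) then gives the fractional case $s=k_p$. With this adjustment your argument is complete.
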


\subsection{Canonical transformations and the associated operators}
We now introduce a special class of operators, defined in terms of their sparsity with respect to Gabor wave packets and a given canonical transformation.\par
A canonical transformation $(x,\xi)=\chi(y,\eta)$ will be called {\it tame} if it enjoys the following two properties:
\par\medskip
\noindent {\it  A1.} $\chi:\rdd\to\rdd$ is smooth, invertible,  and
preserves the symplectic form in $\rdd$, i.e., $dx\wedge d\xi= d
y\wedge d\eta$; $\chi $ is a \emph{symplectomorphism}.
\par\medskip\noindent
{\it  A2.} We have
\begin{equation}\label{chistima}
|\partial^\alpha_y\partial^\beta_\eta \chi(y,\eta)|\leq C_{\a,\beta},\quad |\a|+|\beta|\geq 1,\ y,\eta\in\rd.\end{equation}

For example, the canonical transformation $\chi(t,s)$ in \eqref{chi} satisfies {\it A1} and {\it A2}, with constants $C_{\alpha,\beta}=C_{\alpha,\beta}(T)$, provided $|t-s|\leq T$ ($T>0$ being arbitrary), in view of \eqref{stimeflusso}. 

Let us observe that {\it A1} and {\it A2} imply that $\chi$ and
$\chi^{-1}$ are globally Lipschitz. With the notation $\langle z\rangle=(1+|z|^2)^{1/2}$ (Japanese bracket), this property implies that
$$
C^{-1}\langle w-\chi (z) \rangle\leq \langle \chi \inv (w) - z\rangle\leq C \langle w-\chi (z) \rangle
\qquad w,z\in \rdd \, ,
$$
for some constant $C>0$ {\it depending only on an upper bound for the first derivatives of $\chi$}. 
The following class of operators was introduced in \cite{tataru}.

\begin{definition}
Let $\chi$ be a tame canonical transformation. Let $g\in\cS(\rd)\setminus\{0\}$.  We denote by $FIO(\chi)$ the class of operators $T:\cS(\rd)\to \cS'(\rd)$ such that, for every\footnote{
In this definition we could of course take $m$ integer; we take $m$ real in view of possible extensions to the case of rough potentials, where the propagators should  satisfy the same estimates as in \eqref{std} but for a {\it fixed} $m>2d$. In that case it is important to allow $m$ to be as small as possible, in order to reach minimal regularity. 
} $m>0$,
\begin{equation}\label{std}
|\langle T\pi(z)g,\pi(w) g\rangle|\leq C_m\langle w-\chi(z)\rangle^{-m},\quad z,w\in\rdd.
\end{equation}
We endow this space with the seminorms 
\[
\|T\|_{m,\chi}=\sup_{z,w\in\rdd} \langle w-\chi(z)\rangle^{m}|\langle T\pi(z)g,\pi(w) g\rangle|.
\]
\end{definition}

It is proved in \cite{cgnr} that the definition of $FIO(\chi)$ does not depend on the window $g\in\cS(\rd)\setminus\{0\}$.\par
 The following two theorems can also be found in \cite{cgnr}, but we provide here the proof for the benefit of the reader, because they were proved there in a discrete framework; moreover here we need some further information on the uniformity of the constants. 
\begin{theorem}\label{P26}
Let  $T\in FIO(\chi)$. Then $T$ extends to a bounded
operator on $M^p(\rd)$, $1\leq p\leq\infty$ (and in particular on
$L^2(\rd) = M^2(\rd )$). Moreover, for every $m>2d$ there exists a constant $C>0$ depending only on m and the dimension $d$ such that
\[
\|T\|_{M^p\to M^p}\leq C\|T\|_{m,\chi}.
\]
\end{theorem}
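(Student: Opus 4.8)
\textbf{Proof plan for Theorem \ref{P26}.}
The plan is to prove the $M^p$-boundedness by passing to the phase-space side via the STFT, where the operator $T$ becomes an integral operator with kernel controlled by \eqref{std}, and then invoke a Schur-type test. Concretely, fix a window $g\in\cS(\rd)\setminus\{0\}$ with $\|g\|_{L^2}=1$ (rescaling). Using the inversion formula \eqref{inversione} we write, for $f\in\cS(\rd)$,
\[
V_g(Tf)(w)=\langle Tf,\pi(w)g\rangle=\frac{1}{(2\pi)^d}\int_{\rdd}V_gf(z)\,\langle T\pi(z)g,\pi(w)g\rangle\,dz,
\]
so that $V_g\circ T\circ V_g^\ast$ (up to the constant $(2\pi)^{-d}$) acts on $L^p(\rdd)$ as the integral operator with kernel $K(w,z)=\langle T\pi(z)g,\pi(w)g\rangle$, which by \eqref{std} satisfies $|K(w,z)|\le C_m\langle w-\chi(z)\rangle^{-m}$.

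The next step is the Schur test. Since $M^p$-boundedness of $T$ follows from $L^p(\rdd)$-boundedness of this integral operator together with the mapping properties $V_g:M^p\to L^p$ and $V_g^\ast:L^p\to M^p$ (recalled in Section 3), it suffices to bound
\[
\sup_{z\in\rdd}\int_{\rdd}|K(w,z)|\,dw\quad\text{and}\quad\sup_{w\in\rdd}\int_{\rdd}|K(w,z)|\,dz.
\]
For the first integral, the change of variables $w\mapsto w-\chi(z)$ gives $\int_{\rdd}\langle w\rangle^{-m}\,dw$, which is finite (and independent of $z$) precisely when $m>2d$. For the second integral we use that $\chi$ is globally Lipschitz with Lipschitz constant bounded in terms of the first derivatives of $\chi$ (as recorded before the definition of $FIO(\chi)$), hence $\langle w-\chi(z)\rangle^{-m}\le C\langle \chi^{-1}(w)-z\rangle^{-m}$ with $C$ depending only on that bound; the substitution $z\mapsto \chi^{-1}(w)-z$, whose Jacobian is $1$ because $\chi$ is a symplectomorphism, again yields $C\int_{\rdd}\langle z\rangle^{-m}\,dz<\infty$. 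Both bounds are $\le C'\|T\|_{m,\chi}$ with $C'$ depending only on $m$, $d$. By the Schur test the integral operator is bounded on $L^p(\rdd)$ for all $1\le p\le\infty$ with the same constant, and composing with $V_g$, $V_g^\ast$ gives $\|T\|_{M^p\to M^p}\le C\|T\|_{m,\chi}$ as claimed; the case $p=2$ is $M^2=L^2$.

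The delicate point is not the Schur estimate itself but justifying that $V_g\circ T\circ V_g^\ast$ really is represented by the kernel $K$ and that the a priori manipulations are legitimate on the Schwartz class before extending by density to all of $M^p$. Here one uses that for $f\in\cS(\rd)$ the function $V_gf$ decays rapidly, that $T:\cS\to\cS'$ so $\langle Tf,\pi(w)g\rangle$ makes sense pointwise, and that the kernel estimate \eqref{std} provides enough decay (for $m>2d$) to apply Fubini and interchange $\langle T(\cdot),\pi(w)g\rangle$ with the integral in $z$. Once the bound $\|Tf\|_{M^p}\le C\|T\|_{m,\chi}\|f\|_{M^p}$ is established for $f\in\cS$, density of $\cS$ in $M^p$ for $p<\infty$ (and a weak-$\ast$ argument for $p=\infty$) yields the bounded extension. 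I expect this density/duality bookkeeping to be the only real obstacle; the heart of the argument is the two elementary Schur integrals, which converge exactly at the stated threshold $m>2d$.
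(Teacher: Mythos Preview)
Your approach is essentially the same as the paper's: lift $T$ to $L^p(\rdd)$ via $V_g$ and $V_g^\ast$, identify $V_gTV_g^\ast$ as an integral operator with kernel $K(w,z)=\langle T\pi(z)g,\pi(w)g\rangle$, and apply Schur's test using the decay \eqref{std}. The paper's proof is terser but follows exactly this line.

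There is one small defect in your execution of the second Schur integral. You first invoke the Lipschitz equivalence $\langle w-\chi(z)\rangle^{-m}\le C\langle\chi^{-1}(w)-z\rangle^{-m}$ and then perform the translation $z\mapsto\chi^{-1}(w)-z$. Two remarks: (i) that substitution is a translation (plus reflection), so its Jacobian is $1$ for trivial reasons, not because $\chi$ is symplectic; (ii) more importantly, the constant $C$ in the Lipschitz equivalence depends on upper bounds for the first derivatives of $\chi$, so your final bound depends on $\chi$, whereas the theorem asserts that $C$ depends only on $m$ and $d$. The paper avoids this by handling the $z$-integral via the direct change of variables $z'=\chi(z)$, which has Jacobian $1$ precisely because $\chi$ is symplectic; this gives $\int_{\rdd}\langle w-\chi(z)\rangle^{-m}\,dz=\int_{\rdd}\langle w-z'\rangle^{-m}\,dz'$, with no $\chi$-dependent constant. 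With this adjustment your argument matches the paper's and yields the stated uniformity.
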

\begin{proof}

By the inversion formula for the STFT in \eqref{inversione} we have, if $\|g\|_{L^2}=(2\pi)^{-d/2}$, 
 \[
 T= V_g^\ast V_g T V_g^\ast V_g.
 \]
Since $V_g:M^p(\rd)\to L^p(\rdd)$ and $V_g^\ast:L^p(\rdd)\to M^p(\rd)$, it suffices to prove that the operator $V_g T V_g^\ast$ is bounded on $L^p(\rdd)$. Now, this in an integral operator in $\rdd$ with integral kernel
 \[
 K(w,z)=\langle T\pi (z)g, \pi (w)g\rangle.
 \] 
Since $T\in FIO(\chi)$ we have, for every $m\in\bN$,
 \begin{align*}
 |V_g T V_g^\ast F(w)|&\leq \|T\|_{m,\chi}\int_{\rdd} \langle w-\chi(z)\rangle^{-m}
 |F(z)|\,dz.
 \end{align*}
 If we take $m>2d$, the desired continuity on $L^p(\rdd)$ follows at once from Schur's test and the fact that the Jacobian determinant of $\chi$ is $=1$, $\chi$ being symplectic.  
 \end{proof}

\begin{theorem}\label{prod1}
If $T^{(i)}\in FIO(\chi_i)$, $i=1,2$, then the
composition  $T^{(1)}T^{(2)}$ is in $FIO(\chi_1\circ \chi_2)$. Moreover for every $m>2d$ there exists a constant $C>0$ depending only on m, the dimension $d$, and upper bounds for the first derivatives of $\chi_1$ such that
\[
\|T^{(1)}T^{(2)}\|_{m,\chi_1\circ\chi_2}\leq C\|T^{(1)}\|_{m,\chi_1}\|T^{(2)}\|_{m,\chi_2}
\]
\end{theorem}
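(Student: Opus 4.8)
The plan is to estimate the matrix entries $\langle T^{(1)}T^{(2)}\pi(z)g,\pi(w)g\rangle$ directly, by inserting a resolution of the identity between the two operators. Precisely, fix $g\in\cS(\rd)$ with $\|g\|_{L^2}=(2\pi)^{-d/2}$, so that the inversion formula \eqref{inversione} reads $f=\int_{\rdd}V_gf(u)\pi(u)g\,du$. Applying this to $T^{(2)}\pi(z)g$ and pairing with $\pi(w)g$ gives
\[
\langle T^{(1)}T^{(2)}\pi(z)g,\pi(w)g\rangle=\int_{\rdd}\langle T^{(2)}\pi(z)g,\pi(u)g\rangle\,\langle T^{(1)}\pi(u)g,\pi(w)g\rangle\,du.
\]
Using the defining bounds \eqref{std} for $T^{(2)}\in FIO(\chi_2)$ and $T^{(1)}\in FIO(\chi_1)$, the integrand is dominated by $\|T^{(2)}\|_{m,\chi_2}\|T^{(1)}\|_{m,\chi_1}\langle u-\chi_2(z)\rangle^{-m}\langle w-\chi_1(u)\rangle^{-m}$, so the task reduces to the convolution-type estimate
\[
\int_{\rdd}\langle u-\chi_2(z)\rangle^{-m}\langle w-\chi_1(u)\rangle^{-m}\,du\leq C\,\langle w-\chi_1(\chi_2(z))\rangle^{-m}
\]
for $m>2d$, with $C$ depending only on $m$, $d$, and an upper bound for the first derivatives of $\chi_1$.

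To prove this kernel estimate I would use the standard "peak-splitting" trick together with the bi-Lipschitz property of $\chi_1$ recorded in the excerpt. Write $a=\chi_2(z)$ and $b=\chi_1(\chi_2(z))=\chi_1(a)$. For each $u$ one has, by the triangle inequality and the global Lipschitz bound $|\chi_1(u)-\chi_1(a)|\leq L|u-a|$ (with $L$ an upper bound for the first derivatives of $\chi_1$),
\[
\langle w-b\rangle\leq\langle w-\chi_1(u)\rangle+\langle\chi_1(u)-\chi_1(a)\rangle\leq\langle w-\chi_1(u)\rangle+L\langle u-a\rangle\lesssim\langle w-\chi_1(u)\rangle\,\langle u-a\rangle,
\]
using $\langle s+t\rangle\leq\langle s\rangle\langle t\rangle$. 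Hence $\langle w-b\rangle^{-m}\gtrsim^{-1}$ — more precisely $\langle w-b\rangle^{m}\lesssim\langle w-\chi_1(u)\rangle^{m}\langle u-a\rangle^{m}$, so that
\[
\langle u-a\rangle^{-m}\langle w-\chi_1(u)\rangle^{-m}\leq C\,\langle w-b\rangle^{-m}\big(\langle u-a\rangle^{-m/2}\langle w-\chi_1(u)\rangle^{-m/2}\big)\cdot\langle u-a\rangle^{-m/2}\langle w-\chi_1(u)\rangle^{-m/2},
\]
and after extracting the factor $\langle w-b\rangle^{-m}$ it remains to bound $\int_{\rdd}\langle u-a\rangle^{-m/2}\langle w-\chi_1(u)\rangle^{-m/2}\,du$ by a constant. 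Splitting the region of integration into $\{|u-a|\geq|w-\chi_1(u)|\}$ and its complement, on the first region one has $\langle u-a\rangle^{-m/2}\lesssim\langle w-\chi_1(u)\rangle^{-m/4}\langle u-a\rangle^{-m/4}$ and then changes variables $u\mapsto\chi_1(u)$ (Jacobian $1$, since $\chi_1$ is symplectic) to get a convergent integral for $m>2d$; on the second region one keeps the variable $u$ and integrates $\langle u-a\rangle^{-m/2}$ directly. A cleaner route, which I would actually adopt, is: for $m>2d$ simply note $\langle u-a\rangle^{-m}\langle w-\chi_1(u)\rangle^{-m}\leq\langle w-b\rangle^{-m}\langle u-a\rangle^{-m}$ on the set where $\langle w-\chi_1(u)\rangle\geq\langle u-a\rangle$ is false is the wrong inequality; so instead one uses the symmetric bound $\min\{\langle u-a\rangle^{-m},\langle w-\chi_1(u)\rangle^{-m}\}\leq\langle w-b\rangle^{-m}\cdot C$ on all of $\rdd$ and the other factor is $L^1$, again invoking the unit Jacobian after the substitution $u\mapsto\chi_1(u)$.

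The main obstacle is bookkeeping the \emph{uniformity of the constant}: the statement insists $C$ depend only on $m$, $d$, and an upper bound for the first derivatives of $\chi_1$, but not on $\chi_1$ itself or on $\chi_2$ in any finer way. This is exactly what forces the use of the global Lipschitz constant $L$ (controlled by the first derivatives via A2) rather than any higher-order information, and it is why the bi-Lipschitz inequality stated just before the definition — "for some constant $C>0$ depending only on an upper bound for the first derivatives of $\chi$" — is the right tool: it transfers $\langle w-\chi_1(u)\rangle$ to $\langle\chi_1^{-1}(w)-u\rangle$ with controlled loss, after which the change of variables is measure-preserving. Once the convolution estimate is established with a constant of the required type, one reads off $\|T^{(1)}T^{(2)}\|_{m,\chi_1\circ\chi_2}\leq C\|T^{(1)}\|_{m,\chi_1}\|T^{(2)}\|_{m,\chi_2}$; that $\chi_1\circ\chi_2$ is again tame (A1 by functoriality of symplectomorphisms, A2 by the chain rule and A2 for each factor) is routine and completes the proof.
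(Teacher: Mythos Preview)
Your plan is correct and coincides with the paper's proof: insert the reproducing identity $V_g^\ast V_g$ between the two factors and reduce to the kernel estimate $\int_{\rdd}\langle w-\chi_1(\zeta)\rangle^{-m}\langle\zeta-\chi_2(z)\rangle^{-m}\,d\zeta\leq C\langle w-\chi_1\circ\chi_2(z)\rangle^{-m}$. The paper's execution of this last step is cleaner than your peak-splitting sketch (whose ``cleaner route'' paragraph, as written, only yields decay of order $m/2$): it applies the bi-Lipschitz bound once to replace $\langle\zeta-\chi_2(z)\rangle^{-m}$ by $C\langle\chi_1(\zeta)-\chi_1(\chi_2(z))\rangle^{-m}$, substitutes $\tilde\zeta=\chi_1(\zeta)-\chi_1\circ\chi_2(z)$ (unit Jacobian, $\chi_1$ symplectic), and invokes the elementary convolution inequality $\int_{\rdd}\langle w-\zeta\rangle^{-m}\langle\zeta\rangle^{-m}\,d\zeta\leq C_{m,d}\langle w\rangle^{-m}$ for $m>2d$---exactly the route you identify in your final paragraph.
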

\begin{proof}
We have to estimate the integral kernel of $V_gT^{(1)}T^{(2)}V_g^\ast$. We write
 \[
  V_gT^{(1)}T^{(2)}V_g^\ast= V_g T^{(1)} V_g^\ast V_g T^{(2)}V_g^\ast,
  \]
 so that for $m>2d$
 \begin{align*}
 |\langle T^{(1)}&T^{(2)}\pi(z)g,\pi(w)g\rangle |\leq\|T^{(1)}\|_{m,\chi_1}\|T^{(2)}\|_{m,\chi_2}\int_\rdd \langle w-\chi_1(\zeta)\rangle^{-m}\langle\zeta-\chi_2(z)\rangle^{-m} d\zeta\\
 &\leq C \|T^{(1)}\|_{m,\chi_1}\|T^{(2)}\|_{m,\chi_2}\int_\rdd \langle w-\chi_1(\zeta)\rangle^{-m}\langle\chi_1(\zeta)-\chi_1\circ\chi_2(z)\rangle^{-m} d\zeta.
 \end{align*}
 The change of variable $\tilde{\zeta}=\chi_1(\zeta)-\chi_1\circ\chi_2(z)$ and the convolution property 
 \[
 \int_{\rdd} \langle w-\zeta\rangle^{-m}\langle \zeta\rangle^{-m}\,  d\zeta\leq C_{m,d}\langle w \rangle^{-m}
 \]
 (valid for $m>2d$, see e.g. \cite[Formula (11.5)]{book}) then give the desired estimate.
 \end{proof}
 The reader may want to compare the simplicity of this result with the similar composition formula obtained in \cite[Theorem A.2]{fujiwara2} for oscillatory integral operators, whose proof represented really a tour de force and moreover was limited to short time.\par 
We now show that a familiar class of Fourier integral operators is in fact contained in $FIO(\chi)$ for a suitable $\chi$. \par
A real phase function $\Phi$ on $\rdd$ will be called \emph{tame} if the
following three properties are satisfied:\par\medskip \noindent
{\it B1}. $\Phi\in \cC^{\infty}(\rdd)$;\par\medskip\noindent
{\it B2}. We have
\begin{equation}\label{phasedecay}
|\partial_x^\a\partial^\beta_\eta \Phi(x,\eta)|\leq C_{\a,\beta},\quad |\a|+|\beta|\geq 2,\ x,\eta\in\rd;\end{equation}
{\it B3}. There exists $\tilde{\delta}>0$ such that
\begin{equation}\label{detcond}
  |\det\,\partial^2_{x,\eta} \Phi(x,\o)|\geq \tilde{\delta},\quad x,\eta\in\rd.
\end{equation}
\par
If we set
\begin{equation}\label{cantra} \left\{
               \begin{array}{l}
               y=\nabla_{\eta}\Phi(x,\eta)
               \\
              \xi=\nabla_{x}\Phi(x,\eta), \rule{0mm}{0.55cm}
               \end{array}
               \right.
\end{equation}
 we can solve with respect to $(x,\xi)$ by the global inverse
function theorem and  obtain a mapping
$\chi$ defined by $(x,\xi)=\chi(y,\o)$ and satisfying {\it A1, A2} above  as well as the following property:\par\medskip\noindent
{\it A3} There exists $\tilde{\delta}>0$ such that, \begin{equation}\label{detcond2}  |\det\,\frac{\partial x}{\partial y}(y,\eta)|\geq \tilde{\delta},\quad y,\eta\in\rd.
 \end{equation}
Conversely,  to every transformation $\chi$ satisfying {\it A1},
{\it A2} and {\it A3} 
corresponds a tame phase $\Phi$, uniquely
determined up to a constant (see e.g.\ \cite{cgnr}). \par
 The following result was proved in \cite[Theorem 3.3]{cnr1}. 
\begin{theorem}\label{teo3.5}
Let $\Phi(x,\eta)$ be a tame phase, and  let $\chi$ be the corresponding canonical transformation. Let $a(x,\eta)$ be a function in $S^0_{0,0}$. The Fourier integral operator  
\[
T f(x)=(2\pi)^{-d}\int_{\rd} e^{i\Phi(x,\eta)}a(x,\eta)\widehat{f}(\eta)\, d\eta
\]
belongs to $FIO(\chi)$. Moreover for every $m\in\bN$ there exists $m'\in\bN$ such that
\[
\|T\|_{m,\chi}\leq C\|a\|_{m'} 
\]
for some constant $C$ depending only on $m$, the dimension $d$, upper bounds for a certain number of the derivatives of $\Phi$ in \eqref{phasedecay} and the lower bound constant $\tilde{\delta}$ in \eqref{detcond}.
\end{theorem}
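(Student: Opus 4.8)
The plan is to compute the Gabor matrix $K(w,z):=\langle T\pi(z)g,\pi(w)g\rangle$ explicitly as an oscillatory integral and then read off the decay in $\langle w-\chi(z)\rangle$ by a non-stationary phase argument, any exponent $m$ being achievable at the cost of invoking more seminorms of $a$ and more derivative bounds of $\Phi$. I fix once and for all a window $g\in\cS(\rd)\setminus\{0\}$, since $FIO(\chi)$ does not depend on it. Using that the Fourier transform intertwines phase-space shifts, namely $\widehat{\pi(z)g}(\eta)=e^{iz_1\cdot(z_2-\eta)}\,\hat g(\eta-z_2)$ with $z=(z_1,z_2)$, one finds that, up to an inessential unimodular factor,
\[
K(w,z)=(2\pi)^{-d}\iint_{\rdd} e^{i\psi(x,\eta)}A_{w,z}(x,\eta)\,dx\,d\eta,\qquad w=(w_1,w_2),
\]
where $\psi(x,\eta)=\Phi(x,\eta)-x\cdot w_2-\eta\cdot z_1$ and $A_{w,z}(x,\eta)=\overline{g(x-w_1)}\,a(x,\eta)\,\hat g(\eta-z_2)$. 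After the translation $x=w_1+u$, $\eta=z_2+v$ the amplitude becomes $\overline{g(u)}\,a(w_1+u,z_2+v)\,\hat g(v)$, which, because $a\in S^0_{0,0}$ and $g,\hat g\in\cS(\rd)$, satisfies together with all its $(u,v)$-derivatives the bound $|\partial_u^\alpha\partial_v^\beta(\cdots)|\leq C_{\alpha,\beta,N}\,\|a\|_{|\alpha|+|\beta|}\,\langle(u,v)\rangle^{-N}$ for every $N$, uniformly in $(w_1,z_2)$.

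The geometric core of the argument is the lower bound $|\nabla_{u,v}\tilde\psi(0,0)|\geq c\,|w-\chi(z)|$, where $\tilde\psi(u,v):=\psi(w_1+u,z_2+v)$. First I would compute $\nabla_u\tilde\psi(0,0)=\nabla_x\Phi(w_1,z_2)-w_2=:\Xi-w_2$ and $\nabla_v\tilde\psi(0,0)=\nabla_\eta\Phi(w_1,z_2)-z_1=:Y-z_1$. By the defining relations \eqref{cantra}, $x=w_1$ is the solution of $Y=\nabla_\eta\Phi(x,z_2)$, hence $\chi(Y,z_2)=(w_1,\Xi)$, and therefore
\[
w-\chi(z)=(0,\ w_2-\Xi)+\big(\chi(Y,z_2)-\chi(z_1,z_2)\big).
\]
Since $\chi$ is globally Lipschitz (by {\it A1}, {\it A2}, with constant $L$ controlled by the derivative bounds for $\Phi$ in \eqref{phasedecay} and by $\tilde\delta$ in \eqref{detcond}) and the arguments $(Y,z_2)$ and $(z_1,z_2)$ differ only in the first $d$ coordinates, this yields $|w-\chi(z)|\leq|w_2-\Xi|+L\,|z_1-Y|\lesssim|\nabla_{u,v}\tilde\psi(0,0)|$. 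As $|\nabla^2_{u,v}\tilde\psi|=|\nabla^2\Phi|$ is bounded by \eqref{phasedecay}, a first-order Taylor expansion of $\nabla_{u,v}\tilde\psi$ about the origin upgrades this to $|\nabla_{u,v}\tilde\psi(u,v)|\geq c\,|w-\chi(z)|$ on the ball $|(u,v)|\leq c\,|w-\chi(z)|$, for a small $c>0$ depending only on the same quantities. I expect this geometric lemma — the only step where the hypotheses {\it A1}--{\it A3}, {\it B1}--{\it B3} genuinely intervene — to be the main obstacle; invoking the Lipschitz property of $\chi$ makes it short, but it is really the crux.

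Granting this, \eqref{std} follows by routine estimates. If $|w-\chi(z)|\leq1$ one bounds $|K(w,z)|\leq(2\pi)^{-d}\|A_{w,z}\|_{L^1}\leq C\|a\|_0\lesssim\langle w-\chi(z)\rangle^{-m}$. If $R:=|w-\chi(z)|>1$, split the $(u,v)$-integral into $|(u,v)|\leq cR$ and its complement. On the complement the Schwartz decay of the amplitude gives a contribution $\lesssim C_N\int_{|(u,v)|>cR}\langle(u,v)\rangle^{-N}\,du\,dv\lesssim R^{-(N-2d)}$, which beats $R^{-m}$ for $N$ large. On $|(u,v)|\leq cR$ I would integrate by parts (for $m\in\bN$, $m$ times; in general $\lceil m\rceil$ times) with $L=\tfrac{1}{i}|\nabla\tilde\psi|^{-2}\,\nabla\tilde\psi\cdot\nabla_{u,v}$, which is legitimate since $|\nabla\tilde\psi|\geq cR>0$ there: each of the finitely many resulting terms carries at least $m$ factors $|\nabla\tilde\psi|^{-1}\lesssim R^{-1}$ (additional such factors being harmless because $|\nabla\tilde\psi|\gtrsim1$), multiplied by $(u,v)$-derivatives of the amplitude of order $\leq m$ (a Schwartz factor times $\|a\|_m$) and derivatives of $\Phi$ of order $\leq m+1$ (bounded by \eqref{phasedecay}); integrating gives a contribution $\lesssim C_m R^{-m}\leq C\langle w-\chi(z)\rangle^{-m}$. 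Collecting the cases yields $|K(w,z)|\leq C\langle w-\chi(z)\rangle^{-m}$ with $C$ depending only on $m$, $d$, the seminorm $\|a\|_{m'}$ with $m'=m$ (resp.\ $m'=\lceil m\rceil$), finitely many of the constants in \eqref{phasedecay}, and $\tilde\delta$, which is the claim.
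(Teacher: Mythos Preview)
The paper does not give its own proof of this theorem; it simply cites \cite[Theorem~3.3]{cnr1}. Your argument is correct and is essentially the standard one used in that reference: write the Gabor matrix as an oscillatory integral, localise around $(w_1,z_2)$ via the Schwartz windows $g,\hat g$, identify $\nabla_{u,v}\tilde\psi(0,0)$ with (a bi-Lipschitz image of) $w-\chi(z)$ through the relation $\chi(Y,z_2)=(w_1,\Xi)$, and extract the decay by non-stationary phase. The geometric step, which you correctly single out as the heart of the matter, is exactly right, including the observation that the Lipschitz constant $L$ of $\chi$ is controlled by the bounds in \eqref{phasedecay} and by $\tilde\delta$ in \eqref{detcond} (via Cramer's rule for $(\partial^2_{x,\eta}\Phi)^{-1}$), which is where the ``depending only on \ldots'' clause of the statement is justified.

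One minor technical point: when you split sharply into $\{|(u,v)|\leq cR\}$ and its complement and integrate by parts on the ball, boundary terms appear on the sphere $|(u,v)|=cR$. They are harmless because they carry the Schwartz factor $\langle cR\rangle^{-N}$, but it is cleaner either to insert a smooth cutoff $\rho(|(u,v)|/(cR))$ (whose derivatives only contribute extra harmless powers of $R^{-1}$), or to avoid the splitting entirely by absorbing the second-order Taylor remainder $\Psi_2$ of the phase into the amplitude: since $|\partial^\alpha\Psi_2|\leq C_\alpha$ for $|\alpha|\geq2$ and $|\nabla\Psi_2(u,v)|\lesssim|(u,v)|$, the function $e^{i\Psi_2}\overline{g(u)}\,a(w_1+u,z_2+v)\,\hat g(v)$ lies in a bounded subset of $\cS(\rdd)$ uniformly in $(w_1,z_2)$, and the kernel is just its Fourier transform at $-\nabla_{u,v}\tilde\psi(0,0)$.
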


\subsection{Semiclassical phase space methods}
Here we rephrase in the semiclassical regime the concepts appeared above in this section. Basically we have to rescale function spaces and operators with the dilation operators (unitary on $L^2(\rd)$)
\[
D_{\hbar^{-1/2}} f(x)=\hbar^{d/4} f(\hbar^{1/2} x),\quad D_{\hbar^{1/2}} f(x)=\hbar^{-d/4} f(\hbar^{-1/2}x).
\]
We already defined in Introduction the spaces $\hlp$, which are a rescaled version of the Sobolev spaces $L^p_k$. For $1\leq p\leq\infty$, consider also the modulation spaces
\[
M^p_\hbar=\{f\in\cS'(\rd):\ \|f\|_{M^p_\hbar}=\| D_{\hbar^{-1/2}} f\|_{M^p}<+\infty\}.
\]
We have the following embeddings.
\begin{theorem}\label{immersionisemi}
Let $1<p<\infty$ and $k=k_p=2d|1/2-1/p|$. There exists a constant $C>0$ such that 
\begin{equation}\label{imm1}
\|f\|_{L^p}\leq C \hbar^{d(1/p-1/2)/2}\|f\|_{M^p_\hbar},\quad \|f\|_{M^p_\hbar}\leq C\hbar^{d(1/2-1/p)/2}\|f\|_{\tilde{L}^p_{k}}\quad 1<p\leq 2
\end{equation}
as well as
\begin{equation}\label{imm2}
\|f\|_{M^p_\hbar}\leq C \hbar^{d(1/2-1/p)/2}\|f\|_{L^p},\quad \|f\|_{\tilde{L}^p_{-k}}\leq C\hbar^{d(1/p-1/2)/2}\|f\|_{M^p_\hbar}\quad 2\leq p<\infty.
\end{equation}
\end{theorem}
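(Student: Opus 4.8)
The plan is to deduce Theorem~\ref{immersionisemi} directly from the non-semiclassical embeddings of Theorem~\ref{proimmersioni} by conjugating with the dilation operators $D_{\hbar^{\pm 1/2}}$ and keeping careful track of the powers of $\hbar$ produced by the rescaling. The only substantial input is Theorem~\ref{proimmersioni}; everything else is bookkeeping based on two elementary identities.

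First I would record how the norms involved transform under $D_{\hbar^{-1/2}}$. Since $D_{\hbar^{-1/2}}f(x)=\hbar^{d/4}f(\hbar^{1/2}x)$, a change of variables gives
\[
\|D_{\hbar^{-1/2}}f\|_{L^p}=\hbar^{d(1/2-1/p)/2}\|f\|_{L^p},
\]
while on the modulation-space side there is nothing to prove, because $M^p_\hbar$ is defined precisely by $\|f\|_{M^p_\hbar}=\|D_{\hbar^{-1/2}}f\|_{M^p}$. The second ingredient is the intertwining relation $\Delta\,D_{\hbar^{-1/2}}=\hbar\,D_{\hbar^{-1/2}}\Delta$, immediate from the chain rule, which by the spectral theorem upgrades to $(1-\Delta)^{k/2}D_{\hbar^{-1/2}}=D_{\hbar^{-1/2}}(1-\hbar\Delta)^{k/2}$. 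Combined with the displayed dilation law this yields
\[
\|D_{\hbar^{-1/2}}f\|_{L^p_k}=\hbar^{d(1/2-1/p)/2}\|f\|_{\tilde{L}^p_{k}},
\]
and the same with $k$ replaced by $-k$, the exponent then becoming $d(1/p-1/2)/2$.

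With these identities the four estimates are one line each. For $1<p\le 2$: applying the embedding $M^p\hookrightarrow L^p$ of Theorem~\ref{proimmersioni} to $D_{\hbar^{-1/2}}f$ and then multiplying by $\hbar^{-d(1/2-1/p)/2}$ gives the first inequality in \eqref{imm1}; applying $L^p_k\hookrightarrow M^p$ to $D_{\hbar^{-1/2}}f$ and inserting the second displayed identity gives the second. For $2\le p<\infty$ the same two steps, now with the dual embeddings $L^p\hookrightarrow M^p$ and $M^p\hookrightarrow L^p_{-k}$, give the two inequalities in \eqref{imm2}. I do not expect any genuine difficulty here; the one point that needs care is the direction of each inequality and the sign of the exponent of $\hbar$, which differ in the two ranges of $p$, so it is worth checking that the power of $\hbar$ stated is the one actually produced by the scaling and not its reciprocal (recall $0<\hbar\le 1$, so the two are not interchangeable).
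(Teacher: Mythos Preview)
Your proposal is correct and follows essentially the same route as the paper's proof: apply the non-semiclassical embeddings of Theorem~\ref{proimmersioni} to $D_{\hbar^{-1/2}}f$ and use the scaling identities $\|D_{\hbar^{-1/2}}f\|_{L^p}=\hbar^{d(1/2-1/p)/2}\|f\|_{L^p}$ and $\|D_{\hbar^{-1/2}}f\|_{L^p_k}=\hbar^{d(1/2-1/p)/2}\|f\|_{\tilde L^p_k}$. One small slip: the exponent in the second displayed identity does \emph{not} change when $k$ is replaced by $-k$ (it depends only on $p$, not on $k$); the sign flip you see in \eqref{imm2} comes from dividing through by $\hbar^{d(1/2-1/p)/2}$, not from the Sobolev index.
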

\begin{proof}
By \eqref{immersioni} we have, for $1<p\leq 2$ and $k=2d|1/p-1/2|$,
\[
\|f\|_{L^p}\leq C \|f\|_{M^p},\quad \|f\|_{M^p}\leq C\|f\|_{L^p_k}.
\]
Now we replace $f$ by $D_{\hbar^{-1/2}} f$ and we obtain \eqref{imm1}, because
\[
\|D_{\hbar^{-1/2}} f\|_{L^p}=\hbar^{d(1/2-1/p)/2}\|f\|_{L^p}\quad{\rm and}\quad 
\|D_{\hbar^{-1/2}} f\|_{L^p_k}=\hbar^{d(1/2-1/p)/2}\|f\|_{\tilde{L}^p_k}.
\]
Similarly one deduces \eqref{imm2}.
\end{proof}
\begin{definition}
Let $\chi$ be a tame canonical transformation. We denote by $FIO_\hbar(\chi)$ the space of linear operators $T:\cS(\rd)\to\cS'(\rd)$ such that $D_{\hbar^{-1/2}} T D_{\hbar^{1/2}}\in FIO(\chi)$, endowed with the seminorms
\[
\|T\|^\hbar_{m,\chi}=\| D_{\hbar^{-1/2}} T D_{\hbar^{1/2}}\|_{m,\chi}.
\]
\end{definition}
From Theorems \ref{P26} we obtain at once the following result.
\begin{theorem}\label{prodsemi0}
Let  $T\in FIO_\hbar(\chi)$. Then $T$ extends to a bounded
operator on $M^p_\hbar(\rd)$, $1\leq p\leq\infty$. Moreover, for every $m>2d$ there exists a constant $C>0$ depending only on m and the dimension $d$ such that
\[
\|T\|_{M^p_\hbar\to M^p_\hbar}\leq C\|T\|^\hbar_{m,\chi}.
\]
\end{theorem}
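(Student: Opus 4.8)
The plan is to deduce this statement from Theorem \ref{P26} purely by conjugation with the dilation operators, since by construction the semiclassical objects are exact rescalings of the ordinary ones. First I would record the elementary fact that, directly from the definition of $M^p_\hbar$, the operator $D_{\hbar^{-1/2}}$ is an isometric isomorphism of $M^p_\hbar(\rd)$ onto $M^p(\rd)$ for every $1\leq p\leq\infty$, with inverse $D_{\hbar^{1/2}}:M^p(\rd)\to M^p_\hbar(\rd)$: indeed $\|f\|_{M^p_\hbar}=\|D_{\hbar^{-1/2}}f\|_{M^p}$ by definition, and $D_{\hbar^{1/2}}D_{\hbar^{-1/2}}=D_{\hbar^{-1/2}}D_{\hbar^{1/2}}=\Id$.

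Next, given $T\in FIO_\hbar(\chi)$, I would set $S:=D_{\hbar^{-1/2}}TD_{\hbar^{1/2}}$. By the very definition of $FIO_\hbar(\chi)$ one has $S\in FIO(\chi)$ with $\|S\|_{m,\chi}=\|T\|^\hbar_{m,\chi}$ for every $m>0$. Writing $T=D_{\hbar^{1/2}}SD_{\hbar^{-1/2}}$ and composing
\[
M^p_\hbar(\rd)\xrightarrow{\,D_{\hbar^{-1/2}}\,}M^p(\rd)\xrightarrow{\,S\,}M^p(\rd)\xrightarrow{\,D_{\hbar^{1/2}}\,}M^p_\hbar(\rd),
\]
where the two outer maps are isometric by the first step, shows that $T$ extends to a bounded operator on $M^p_\hbar(\rd)$ with $\|T\|_{M^p_\hbar\to M^p_\hbar}=\|S\|_{M^p\to M^p}$. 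Finally, applying Theorem \ref{P26} to $S$ gives, for every $m>2d$, the bound $\|S\|_{M^p\to M^p}\leq C\|S\|_{m,\chi}$ with $C$ depending only on $m$ and $d$; together with $\|S\|_{m,\chi}=\|T\|^\hbar_{m,\chi}$ this yields the claimed inequality.

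I do not expect any real obstacle: the result is a routine rescaling of Theorem \ref{P26}. The only two points deserving a moment's care are that the dilations are genuine isometries between $M^p_\hbar$ and $M^p$ (immediate from the definition of $M^p_\hbar$), and that the constant $C$ picks up no $\hbar$-dependence — which is guaranteed because the passage $T\mapsto S$ is an exact operator identity rather than an estimate, and because the constant furnished by Theorem \ref{P26} already depends only on $m$ and $d$.
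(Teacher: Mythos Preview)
Your proposal is correct and matches the paper's approach: the paper simply states that the result follows ``at once'' from Theorem~\ref{P26}, and your conjugation-by-dilation argument is exactly the routine rescaling that makes this immediate.
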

\begin{corollary}\label{corollario}
Let  $T\in FIO_\hbar(\chi)$, $1<p<\infty$ and $k=k_p=2d|1/p-1/2|$. Then $T$ extends to a bounded operator $T:\hlp\to L^p$ if $1<p\leq 2$ and $T:\tilde{L}^p\to \tilde{L}^p_{-k}$ for $2\leq p<\infty$. Moreover for every $m>2d$ there exists a constant $C>0$ depending on $m,d,p$ such that
\[
\|T\|_{\hlp\to L^p}\leq C\|T\|^\hbar_{m,\chi} \quad{\rm for} \ 1<p\leq 2
\]
and 
\[
\|T\|_{L^p\to \tilde{L}^p_{-k}}\leq C\|T\|^\hbar_{m,\chi} \quad{\rm for} \ 2\leq p<\infty. 
\]
\end{corollary}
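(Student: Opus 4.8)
The plan is to deduce the corollary by interposing the boundedness of $T$ on the semiclassical modulation spaces (Theorem \ref{prodsemi0}) between the semiclassical Sobolev--modulation embeddings of Theorem \ref{immersionisemi}, keeping track of the powers of $\hbar$ so as to see that they cancel.

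First I would treat the range $1<p\leq 2$. By Theorem \ref{immersionisemi} one has $\|Tf\|_{L^p}\leq C\hbar^{d(1/p-1/2)/2}\|Tf\|_{M^p_\hbar}$ and $\|f\|_{M^p_\hbar}\leq C\hbar^{d(1/2-1/p)/2}\|f\|_{\hlp}$, while Theorem \ref{prodsemi0} gives $\|Tf\|_{M^p_\hbar}\leq C\|T\|^\hbar_{m,\chi}\|f\|_{M^p_\hbar}$ for any $m>2d$, with $C=C(m,d)$. Chaining these three inequalities for $f\in\cS(\rd)$ yields
\[
\|Tf\|_{L^p}\leq C\,\hbar^{d(1/p-1/2)/2}\hbar^{d(1/2-1/p)/2}\|T\|^\hbar_{m,\chi}\|f\|_{\hlp}=C\|T\|^\hbar_{m,\chi}\|f\|_{\hlp},
\]
since the two $\hbar$-exponents are reciprocal. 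As $\cS(\rd)$ is dense in $\hlp$ for $p<\infty$, $T$ extends to a bounded operator $\hlp\to L^p$ with the stated bound. The case $2\leq p<\infty$ is symmetric: I would use $\|f\|_{M^p_\hbar}\leq C\hbar^{d(1/2-1/p)/2}\|f\|_{L^p}$ and $\|Tf\|_{\tilde{L}^p_{-k}}\leq C\hbar^{d(1/p-1/2)/2}\|Tf\|_{M^p_\hbar}$ from Theorem \ref{immersionisemi}, sandwich the $M^p_\hbar\to M^p_\hbar$ estimate of Theorem \ref{prodsemi0} in between, observe again the cancellation of the $\hbar$-powers, and extend by density of $\cS(\rd)$ in $L^p$.

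There is no genuine difficulty here; the only point that requires care is the bookkeeping of the $\hbar$-exponents. It is precisely the fact that, in each of the two lines above, the two embeddings of Theorem \ref{immersionisemi} carry the opposite powers $\pm d(1/2-1/p)/2$ of $\hbar$ that makes the resulting operator bound uniform for $0<\hbar\leq 1$ --- and this uniformity is exactly what will be exploited in Section 5 when the short-time building blocks are composed.
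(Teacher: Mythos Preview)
Your proof is correct and follows essentially the same approach as the paper: sandwich the $M^p_\hbar\to M^p_\hbar$ bound from Theorem~\ref{prodsemi0} between the two semiclassical embeddings of Theorem~\ref{immersionisemi}, and observe that the powers $\hbar^{\pm d(1/2-1/p)/2}$ cancel. The paper's own proof is the same chain of three inequalities, written out only for the case $1<p\leq 2$.
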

\begin{proof}
The results follow at once from Theorem \ref{prodsemi0} and Theorem \ref{immersionisemi}. For example, for $1<p\leq 2$ we have
\begin{align*}
\|Tf\|_{L^p}&\leq C_1 \hbar^{d(1/p-1/2)/2}\|Tf\|_{M^p_\hbar}\leq C_1C_2\hbar^{d(1/p-1/2)/2}\|T\|^\hbar_{m,\chi}\|f\|_{M^p_\hbar}\\
&\leq C_1C_2C_3\|T\|^\hbar_{m,\chi}\|f\|_{\hlp}.
\end{align*}
\end{proof}

The following result will be crucial in the sequel. 
\begin{theorem}\label{prodsemi}
If $T^{(i)}\in FIO_\hbar(\chi_i)$, $i=1,2,$ then the
composition  $T^{(1)}T^{(2)}$ is in $FIO_\hbar(\chi_1\circ \chi_2)$. Moreover for every $m>2d$ there exists a constant $C>0$ depending only on m, the dimension $d$, and upper bounds for the first derivatives of $\chi_1$ such that
\[
\|T^{(1)}T^{(2)}\|^\hbar_{m,\chi_1\circ\chi_2}\leq C\|T^{(1)}\|^\hbar_{m,\chi_1}\|T^{(2)}\|^\hbar_{m,\chi_2}
\]
\begin{proof}
We write 
\[
D_{\hbar^{-1/2}} T^{(1)}T^{(2)} D_{\hbar^{1/2}}=D_{\hbar^{-1/2}} T^{(1)} D_{\hbar^{1/2}} D_{\hbar^{-1/2}} T^{(2)} D_{\hbar^{1/2}}
\]
and apply Theorem \ref{prod1}. 
\end{proof}

\end{theorem}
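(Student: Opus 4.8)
The statement to prove is Theorem~\ref{prodsemi}: if $T^{(i)}\in FIO_\hbar(\chi_i)$ for $i=1,2$, then $T^{(1)}T^{(2)}\in FIO_\hbar(\chi_1\circ\chi_2)$ with the quantitative seminorm bound $\|T^{(1)}T^{(2)}\|^\hbar_{m,\chi_1\circ\chi_2}\leq C\|T^{(1)}\|^\hbar_{m,\chi_1}\|T^{(2)}\|^\hbar_{m,\chi_2}$, where $C$ depends only on $m$, $d$, and upper bounds for the first derivatives of $\chi_1$.

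The plan is to reduce everything to the non-semiclassical composition result, Theorem~\ref{prod1}, by conjugating with the dilation operators. By definition of $FIO_\hbar(\chi_i)$, the operators $D_{\hbar^{-1/2}}T^{(i)}D_{\hbar^{1/2}}$ lie in $FIO(\chi_i)$, with $\|D_{\hbar^{-1/2}}T^{(i)}D_{\hbar^{1/2}}\|_{m,\chi_i}=\|T^{(i)}\|^\hbar_{m,\chi_i}$. First I would observe that $D_{\hbar^{1/2}}$ and $D_{\hbar^{-1/2}}$ are mutually inverse (both are unitary on $L^2(\rd)$), so the telescoping identity
\[
D_{\hbar^{-1/2}}\, T^{(1)}T^{(2)}\, D_{\hbar^{1/2}}=\big(D_{\hbar^{-1/2}} T^{(1)} D_{\hbar^{1/2}}\big)\big(D_{\hbar^{-1/2}} T^{(2)} D_{\hbar^{1/2}}\big)
\]
holds on $\cS(\rd)$. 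The right-hand side is the composition of an element of $FIO(\chi_1)$ with an element of $FIO(\chi_2)$, so Theorem~\ref{prod1} applies directly: the product lies in $FIO(\chi_1\circ\chi_2)$ and satisfies $\|\cdot\|_{m,\chi_1\circ\chi_2}\leq C\|D_{\hbar^{-1/2}} T^{(1)} D_{\hbar^{1/2}}\|_{m,\chi_1}\|D_{\hbar^{-1/2}} T^{(2)} D_{\hbar^{1/2}}\|_{m,\chi_2}$ for $m>2d$, with $C$ depending only on $m$, $d$, and bounds for the first derivatives of $\chi_1$. Unwinding the definitions, $D_{\hbar^{-1/2}} T^{(1)}T^{(2)} D_{\hbar^{1/2}}\in FIO(\chi_1\circ\chi_2)$ means precisely $T^{(1)}T^{(2)}\in FIO_\hbar(\chi_1\circ\chi_2)$, and the seminorm inequality becomes $\|T^{(1)}T^{(2)}\|^\hbar_{m,\chi_1\circ\chi_2}\leq C\|T^{(1)}\|^\hbar_{m,\chi_1}\|T^{(2)}\|^\hbar_{m,\chi_2}$, which is the claim.

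Honestly, there is essentially no obstacle here: the entire content is the bookkeeping identity for the conjugation, together with the fact that the dilations cancel in pairs, and then an invocation of Theorem~\ref{prod1}. The one point worth a sentence of care is that $\chi_1\circ\chi_2$ is itself a tame canonical transformation (so that $FIO_\hbar(\chi_1\circ\chi_2)$ makes sense), but this is already part of the conclusion of Theorem~\ref{prod1} and requires nothing new. It is also worth noting that the dependence of $C$ on the derivatives of $\chi_1$ (and not $\chi_2$) is inherited verbatim from Theorem~\ref{prod1}, since the only place such a constant enters is the step where one passes from $\langle\zeta-\chi_2(z)\rangle$ to $\langle\chi_1(\zeta)-\chi_1\circ\chi_2(z)\rangle$ using the global Lipschitz property of $\chi_1$. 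Thus the proof is the two-line argument already indicated in the excerpt, and I would present it exactly as written there, perhaps adding the explicit remark that $D_{\hbar^{1/2}}D_{\hbar^{-1/2}}=\Id$ to make the cancellation transparent.
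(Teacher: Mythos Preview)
Your proof is correct and follows exactly the paper's approach: insert $D_{\hbar^{1/2}}D_{\hbar^{-1/2}}=\Id$ between the two factors and invoke Theorem~\ref{prod1}. The additional remarks you make about tameness of $\chi_1\circ\chi_2$ and the provenance of the constant $C$ are accurate elaborations of what the paper leaves implicit.
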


\section{The Schr\"odinger flow: wave packet analysis}
In this section we prove that the oscillatory integral operators appearing in Section 2 belong to the class $FIO_\hbar(\chi^\hbar(t,s))$ for a suitable canonical transformation $\chi^\hbar$ defined below. Then we will use Theorem \ref{prodsemi} to treat their composition for large time. \par
We adopt the notation of Section 2. In particular the oscillatory integral operators are well defined for $|t-s|<\delta$.
\begin{proposition}\label{prop7}
Let $|t-s|<\delta$ and $a\in S^0_{0,0}$. Consider the operator
\[
Tf(x)=\frac{1}{(2\pi i (t-s) \hbar)^{d/2}} \int_{\rd} e^{i\hbar^{-1}S(t,s,x,y)} a(x,y)f(y)\, dy.
\]
Then $T$ can be written as a Fourier integral operators
\begin{equation}\label{fio1}
Tf(x)=(2\pi)^{-d} \int_{\rd} e^{i\hbar^{-1}\Phi(t,s,x,\hbar\eta)} b(\hbar,t,s,x,\hbar\eta)\widehat{f}(\eta)\, d\eta
\end{equation}
where 
\[
\Phi(t,s,x,\eta)=y\eta+S(t,s,x,y)
\]
is the generating function in the coordinates\footnote{Namely, here $y= y(t,s,x,\eta)$ is the unique solution to $-\eta=\partial S(t,s,x,y)/\partial y$.} $x,\eta$, and $a\in S^0_{0,0}$. Moreover for every $m\in\bN$ there exists $m'\in\bN$ such that
\[
\|b(\hbar,t,s,\cdot,\cdot)\|_m\leq C\|a\|_{m'}
\]
for some constant $C>0$, depending only on $m$, the dimension $d$, upper bounds for a certain number of the derivatives of $(t-s)S(t,s,x,y)$ in \eqref{pre1} and the lower bound constant $\tilde{\delta}$ in \eqref{pre2}.
\end{proposition}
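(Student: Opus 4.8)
The plan is to carry out the integration in the $y$-variable by the method of stationary phase. For $f\in\cS(\rd)$ we substitute $f(y)=(2\pi)^{-d}\int_{\rd}e^{iy\eta}\widehat{f}(\eta)\,d\eta$ and interchange the order of integration, which is legitimate after inserting suitable cut-offs. This gives
\[
Tf(x)=(2\pi)^{-d}\int_{\rd}\Big(\frac{1}{(2\pi i(t-s)\hbar)^{d/2}}\int_{\rd}e^{i\hbar^{-1}(S(t,s,x,y)+y\,\hbar\eta)}a(x,y)\,dy\Big)\widehat{f}(\eta)\,d\eta.
\]
Setting $\tilde{\eta}=\hbar\eta$, the inner integral is an oscillatory integral with large parameter $\hbar^{-1}$ and phase $\psi_{x,\tilde{\eta}}(y)=S(t,s,x,y)+y\tilde{\eta}$. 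By the results recalled in Section 2 this phase has the unique critical point $y=y(t,s,x,\tilde{\eta})$, namely the solution of $\partial_yS(t,s,x,y)=-\tilde{\eta}$ (the coordinate change of the footnote); it is nondegenerate, with Hessian $\partial_y^2S(t,s,x,y)$ controlled by \eqref{pre2}, and the critical value is precisely $\psi_{x,\tilde{\eta}}(y)=y\tilde{\eta}+S(t,s,x,y)=\Phi(t,s,x,\tilde{\eta})$. Hence \eqref{fio1} will follow once the inner integral is identified as $e^{i\hbar^{-1}\Phi(t,s,x,\tilde{\eta})}$ times a good symbol, with the prefactor $(2\pi i(t-s)\hbar)^{-d/2}$ absorbed, after which we undo the substitution $\tilde{\eta}=\hbar\eta$.

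The core of the argument is a stationary phase expansion with constants uniform in $\hbar\in(0,1]$ and in $|t-s|<\delta$. The point is that, although $a$ is not compactly supported and $S$ itself is singular as $t\to s$, all relevant quantities are controlled after multiplication by $t-s$: by \eqref{pre1} the function $(t-s)\psi_{x,\tilde{\eta}}$ has all $y$-derivatives of order $\geq 2$ bounded uniformly, and by \eqref{pre1}, \eqref{pre2} together with the perturbative structure of $\partial^2_yS$ recalled in Section 2 the matrix $M(t,s,x,\tilde{\eta}):=(t-s)\partial_y^2S(t,s,x,y(t,s,x,\tilde{\eta}))$ has entries bounded above and $|\det M|$ bounded below, while $|\nabla_y\psi_{x,\tilde{\eta}}(y)|\gtrsim|t-s|^{-1}|y-y(t,s,x,\tilde{\eta})|$ globally for $\delta$ small. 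The last estimate permits nonstationary-phase integration by parts far from the critical point, while near it one rescales $y-y(t,s,x,\tilde{\eta})=\sqrt{\hbar|t-s|}\,v$, so that the leading quadratic part of $\hbar^{-1}\psi_{x,\tilde{\eta}}$ becomes $\tfrac12\langle Mv,v\rangle$ and the cubic Taylor remainder is $O((\hbar|t-s|)^{1/2})$; Taylor expanding the amplitude and the lower-order phase and integrating term by term against the Gaussian $e^{\frac{i}{2}\langle Mv,v\rangle}$ yields
\[
\frac{1}{(2\pi i(t-s)\hbar)^{d/2}}\int_{\rd}e^{i\hbar^{-1}\psi_{x,\tilde{\eta}}(y)}a(x,y)\,dy=e^{i\hbar^{-1}\Phi(t,s,x,\tilde{\eta})}\,b(\hbar,t,s,x,\tilde{\eta}),
\]
where, up to a universal constant and a Maslov-type branch factor coming from the cancellation $(2\pi)^{d/2}(\hbar|t-s|)^{d/2}\cdot(2\pi i(t-s)\hbar)^{-d/2}$, the symbol $b$ equals $(\det(-iM))^{-1/2}$ times the stationary phase amplitude $b_0=a(x,y(t,s,x,\tilde{\eta}))+O((\hbar|t-s|)^{1/2})$. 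Standard bounds for this expansion give $b_0\in S^0_{0,0}$ with $\|b_0(\hbar,t,s,\cdot,\cdot)\|_m\leq C\|a\|_{m'}$ for some $m'=m'(m,d)$, the constant $C$ depending only on $m$, $d$, finitely many of the bounds in \eqref{pre1}, and $\tilde{\delta}$ in \eqref{pre2}.

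Finally one checks that $b\in S^0_{0,0}$ with the stated dependence of constants. For the factor $(\det(-iM))^{-1/2}$ this is clear: $(t-s)\partial^2_yS$ has bounded $x$-derivatives by \eqref{pre1}, the implicit function theorem applied to $\partial_yS(t,s,x,y)=-\tilde{\eta}$ shows that $y(t,s,x,\tilde{\eta})$ has all $x,\tilde{\eta}$-derivatives bounded uniformly for $|t-s|<\delta$ (indeed $\partial_{\tilde{\eta}}y=-(\partial_y^2S)^{-1}=-(t-s)M^{-1}$ is $O(|t-s|)$), and $z\mapsto(\det(-iz))^{-1/2}$ is smooth with bounded derivatives on $\{\,|\det z|\geq c\,\}$; composing and multiplying by $b_0$ keeps the class $S^0_{0,0}$ and the estimate $\|b(\hbar,t,s,\cdot,\cdot)\|_m\leq C\|a\|_{m'}$. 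Inserting the displayed identity with $\tilde{\eta}=\hbar\eta$ back into the outer integral gives exactly \eqref{fio1}. I expect the main obstacle to be precisely the global, uniform stationary phase estimate in the second step — splitting the $y$-integral into a neighbourhood of the critical point and its complement, running the integration by parts on the complement with constants controlled only by \eqref{pre1}, and tracking the $x,\tilde{\eta}$-dependence of the near-critical expansion so as to obtain genuine $S^0_{0,0}$ bounds on $b$; the remaining steps are bookkeeping of branches and of the powers of $2\pi$, $i$, $\hbar$ and $t-s$.
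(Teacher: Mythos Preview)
Your approach is correct and follows the same stationary-phase route as the paper. The one organizational difference is that the paper absorbs the $|t-s|$-singularity of $S$ into the large parameter rather than into a manual rescaling: writing $\tilde S=(t-s)S$ and $\tilde\eta=\hbar(t-s)\eta$, the inner integral becomes an oscillatory integral with large parameter $(\hbar(t-s))^{-1}$ and phase $y\tilde\eta+\tilde S(t,s,x,y)$, which now has \emph{uniformly} bounded $y$-derivatives of order $\geq 2$ and Hessian determinant bounded below by \eqref{pre1}--\eqref{pre2}. This lets one invoke directly the global stationary-phase lemma of Asada--Fujiwara (\cite[p.~320 and Lemma~3.2]{asada}), which already handles the non-compact support of $a$ and yields the $S^0_{0,0}$ bounds on $b$ with the stated dependence of constants, bypassing your near/far splitting and the $v=(\hbar|t-s|)^{-1/2}(y-y_c)$ rescaling. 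What you do by hand is exactly what that lemma packages; the paper's normalization just makes the uniformity in $t,s,\hbar$ automatic and the ``main obstacle'' you flag disappear into a citation.
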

\begin{proof}
Set $\tilde{S}(t,s,x,y)=(t-s)S(t,s,x,y)$ and $\tilde{\eta}=\hbar(t-s)\eta$. By the Plancherel theorem we can write $T$ in the form
\begin{multline*}
Tf(x)\\=\frac{1}{(2\pi i (t-s) \hbar)^{d/2}} \int_{\rd}\Big(\int_{\rd} \exp\Big({i(\hbar(t-s))^{-1}[y\tilde{\eta}+\tilde{S}(t,s,x,y)]}\Big)a(x,y)\,dy\Big)\widehat{f}(\eta)\,d\eta.
\end{multline*}
 We now apply the stationary phase principle in the form proved in \cite[page 320 and Lemma 3.2]{asada}; the assumptions are satisfied because for $|t-s|<\delta$ we have, by \eqref{pre1} and \eqref{pre2},
\[
|\partial^\alpha_x\partial^\beta_y \tilde{S}(t,s,x,y)|\leq C_{\alpha,\beta}\quad |\alpha|+|\beta|\geq 2,
\]
and 
\[
\Big|{\rm det}\,\Big(\frac{\partial^2 \tilde{S}(t,s,x,y)}{\partial y^2}\Big)\Big|\geq \tilde{\delta}>0.
\]
Moreover, since $y=y(t,s,x,\eta)$ is the unique solution of $-\eta=\partial S(t,s,x,y)/\partial y$, the function  $y=y(t,s,x,\tilde{\eta}/(t-s))$ will be the unique solution to $-\tilde{\eta}=\partial\tilde{S}(t,s,x,y)/\partial y$, and we obtain
\begin{multline*}
\frac{1}{(2\pi i (t-s) \hbar)^{d/2}}\int_{\rd} \exp\Big({i(\hbar(t-s))^{-1}[y\tilde{\eta}+\tilde{S}(t,s,x,y)]}\Big)a(x,y)\,dy\\
=\exp\Big(i(\hbar(t-s))^{-1}[y(t,s,x,\tilde{\eta}/(t-s))\tilde{\eta}+\tilde{S}(t,s,x,y(t,s,x,\tilde{\eta}/(t-s)))]\Big)\\
\times
b(\hbar,t,s,x,\tilde{\eta}))
\end{multline*}
for some amplitude $b(\hbar,t,s,\cdot,\cdot)$ belonging to some bounded subset of $S^0_{0,0}$, when $|t-s|<\delta$, $0<\hbar\leq 1$. Again, we can write the last expression as 
\[
\exp\Big(i\hbar^{-1}[y(t,s,x,\hbar \eta)\hbar\eta+S(t,s,x,y(t,s,x,\hbar\eta))]\Big) b(\hbar,t,s,x,(t-s)\hbar\eta),
\]
which gives the desired expression for $Tf$.\par
 The estimates of the seminorms of $b$ in terms of those of $a$ follow from the proof of the stationary phase principle.
\end{proof}

\begin{corollary}\label{prop8}
With the notation of Proposition \ref{prop7}, we have $T\in FIO_\hbar(\chi^\hbar(t,s))$, with 
\[
\chi^\hbar(t,s)(y,\eta)=\hbar^{-1/2}\chi(t,s,\hbar^{1/2}y,\hbar^{1/2}\eta).
\]
Moreover, for every  $m\in\bN$ there exists $m'\in\bN$ such that, for $|t-s|<\delta$,
\[
\|T\|^\hbar_{m,\chi^\hbar(t,s)}\leq C\|a\|_{m'}
\]
for some constant $C>0$ independent of $a$ and $\hbar$, $t,s$ ($0<\hbar\leq1$, $t-s<\delta$).
\end{corollary}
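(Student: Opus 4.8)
The plan is to combine Proposition~\ref{prop7}, which puts $T$ into semiclassical FIO form, with an explicit conjugation by the dilations $D_{\hbar^{\pm1/2}}$ and with Theorem~\ref{teo3.5}. By Proposition~\ref{prop7},
\[
Tf(x)=(2\pi)^{-d}\int_{\rd}e^{i\hbar^{-1}\Phi(t,s,x,\hbar\eta)}\,b(\hbar,t,s,x,\hbar\eta)\,\widehat f(\eta)\,d\eta,
\]
with $b(\hbar,t,s,\cdot,\cdot)$ in a bounded subset of $S^0_{0,0}$ and $\|b(\hbar,t,s,\cdot,\cdot)\|_m\le C\|a\|_{m'}$ uniformly for $|t-s|<\delta$, $0<\hbar\le1$. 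Using $\widehat{D_{\hbar^{1/2}}f}(\eta)=\hbar^{d/4}\widehat f(\hbar^{1/2}\eta)$ and the change of variables $\eta\mapsto\hbar^{1/2}\eta$ (the powers of $\hbar$ cancelling exactly), one finds
\[
\big(D_{\hbar^{-1/2}}T\,D_{\hbar^{1/2}}\big)f(x)=(2\pi)^{-d}\int_{\rd}e^{i\Phi^\hbar(t,s,x,\eta)}\,b^\hbar(t,s,x,\eta)\,\widehat f(\eta)\,d\eta,
\]
where $\Phi^\hbar(t,s,x,\eta)=\hbar^{-1}\Phi(t,s,\hbar^{1/2}x,\hbar^{1/2}\eta)$ and $b^\hbar(t,s,x,\eta)=b(\hbar,t,s,\hbar^{1/2}x,\hbar^{1/2}\eta)$. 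In view of the definition of $FIO_\hbar$, it then suffices to show that this ordinary Fourier integral operator belongs to $FIO(\chi^\hbar(t,s))$ with seminorms bounded by $C\|a\|_{m'}$ uniformly in $\hbar,t,s$.

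Second, I would check that $\Phi^\hbar$ is a tame phase with constants uniform in $\hbar\in(0,1]$ and $|t-s|<\delta$, and that $b^\hbar$ stays in a bounded subset of $S^0_{0,0}$. The point is that the rescaling produces only nonnegative powers of $\hbar$: for $|\alpha|+|\beta|\ge2$,
\[
\partial^\alpha_x\partial^\beta_\eta\Phi^\hbar(t,s,x,\eta)=\hbar^{(|\alpha|+|\beta|-2)/2}\,\big(\partial^\alpha_x\partial^\beta_\eta\Phi\big)(t,s,\hbar^{1/2}x,\hbar^{1/2}\eta),
\]
and $\hbar^{(|\alpha|+|\beta|-2)/2}\le1$, so \eqref{phasedecay} for $\Phi^\hbar$ follows from the same bounds for $\Phi$, which hold uniformly for $|t-s|<\delta$ since in the mixed coordinates $(x,\eta)$ the generating function stays close to $x\cdot\eta$ — this is what \eqref{pre1}--\eqref{pre2} encode, cf.\ \cite{fujiwara2,asada}. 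For \eqref{detcond}, the balance $\hbar^{-1}\cdot\hbar^{1/2}\cdot\hbar^{1/2}=1$ gives $\det\partial^2_{x,\eta}\Phi^\hbar(x,\eta)=\det\partial^2_{x,\eta}\Phi(t,s,\hbar^{1/2}x,\hbar^{1/2}\eta)$, so the lower bound $\tilde\delta$ is inherited unchanged. Likewise $\partial^\alpha_x\partial^\beta_\eta b^\hbar=\hbar^{(|\alpha|+|\beta|)/2}(\partial^\alpha_x\partial^\beta_\eta b)(\hbar,t,s,\hbar^{1/2}x,\hbar^{1/2}\eta)$, hence $\|b^\hbar(t,s,\cdot,\cdot)\|_m\le\|b(\hbar,t,s,\cdot,\cdot)\|_m\le C\|a\|_{m'}$.

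Third, I would identify the canonical transformation attached to $\Phi^\hbar$ via \eqref{cantra}. From $y=\nabla_\eta\Phi^\hbar(x,\eta)=\hbar^{-1/2}(\nabla_\eta\Phi)(\hbar^{1/2}x,\hbar^{1/2}\eta)$ and $\xi=\nabla_x\Phi^\hbar(x,\eta)=\hbar^{-1/2}(\nabla_x\Phi)(\hbar^{1/2}x,\hbar^{1/2}\eta)$, and using that $(x',\xi')=\chi(t,s)(y',\eta')$ is characterized by $y'=\nabla_\eta\Phi(x',\eta')$, $\xi'=\nabla_x\Phi(x',\eta')$, one obtains $\chi(t,s)(\hbar^{1/2}y,\hbar^{1/2}\eta)=(\hbar^{1/2}x,\hbar^{1/2}\xi)$, that is, exactly $\chi^\hbar(t,s)(y,\eta)=\hbar^{-1/2}\chi(t,s,\hbar^{1/2}y,\hbar^{1/2}\eta)$; being the canonical transformation of the tame phase $\Phi^\hbar$ (equivalently, the conjugate of $\chi(t,s)$ by the conformally symplectic dilation $z\mapsto\hbar^{1/2}z$), it is itself a tame canonical transformation. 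Finally, Theorem~\ref{teo3.5} applied, for each fixed $t,s$, to the Fourier integral operator with phase $\Phi^\hbar$ and symbol $b^\hbar$ gives $D_{\hbar^{-1/2}}T\,D_{\hbar^{1/2}}\in FIO(\chi^\hbar(t,s))$ with $\|D_{\hbar^{-1/2}}T\,D_{\hbar^{1/2}}\|_{m,\chi^\hbar(t,s)}\le C\|b^\hbar\|_{m'}\le C\|a\|_{m''}$; since the constant there depends only on $m$, $d$, finitely many of the derivative bounds \eqref{phasedecay} for $\Phi^\hbar$, and the constant $\tilde\delta$ in \eqref{detcond} — all arranged above to be independent of $\hbar,t,s$ — this is precisely the asserted estimate $\|T\|^\hbar_{m,\chi^\hbar(t,s)}\le C\|a\|_{m''}$.

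The only genuinely delicate point is the uniformity as $t\to s$: every step in $\hbar$ is essentially free, because the dilations contribute only nonnegative powers of $\hbar\le1$, whereas ensuring that $\Phi$ — hence $\Phi^\hbar$ — satisfies \eqref{phasedecay} and \eqref{detcond} with constants that do not degenerate as $t-s\to0$ relies on the fine structure of the generating function $S(t,s,x,y)$ through \eqref{pre1}--\eqref{pre2} and the stationary-phase analysis underlying Proposition~\ref{prop7}.
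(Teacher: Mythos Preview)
Your proof is correct and follows essentially the same route as the paper's: conjugate by the dilations to reduce to an ordinary FIO with phase $\hbar^{-1}\Phi(t,s,\hbar^{1/2}x,\hbar^{1/2}\eta)$ and symbol $b(\hbar,t,s,\hbar^{1/2}x,\hbar^{1/2}\eta)$, verify that this phase is tame uniformly in $\hbar,t,s$ and generates $\chi^\hbar(t,s)$, and then invoke Theorem~\ref{teo3.5}. The only minor difference is that the paper argues the uniform tameness via the canonical transformation side (properties \textit{A1}--\textit{A3} for $\chi(t,s)$, coming from \eqref{stimeflusso} and \cite[Proposition~2.3']{yajima}), whereas you argue directly on the phase side; these are equivalent, and your explicit tracking of the $\hbar$-powers in the derivative and determinant computations is a welcome addition.
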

\begin{proof}
We have to prove that $D_{\hbar^{-1/2}} T D_{\hbar^{1/2}}\in FIO(\chi^\hbar(t,s))$. Using \eqref{fio1} and a scaling argument we can write
\[
D_{\hbar^{-1/2}} T D_{\hbar^{1/2}}f(x)=(2\pi)^{-d} \int_{\rd} e^{i\hbar^{-1}\Phi(\hbar^{1/2}x,\hbar^{1/2}\eta)} b(\hbar,t,s,\hbar^{1/2}x,\hbar^{1/2}\eta)\widehat{f}(\eta)\, d\eta.
\] 
Now, the phase $\hbar^{-1}\Phi(t,s,\hbar^{1/2}x,\hbar^{1/2}\eta)$ generates the canonical transformation $\chi^\hbar(t,s)$ as in the statement. Moreover $\hbar^{-1}\Phi(t,s,\hbar^{1/2}x,\hbar^{1/2}\eta)$ is tame {\it uniformly} with respect to $\hbar$ and $t,s$, for $0<\hbar\leq 1$ and $|t-s|<\delta$, in the sense that the required bounds hold with constants independent of these parameters, or equivalently $\chi^\hbar(t,s)$ satisfies the properties {\it A1,A2,A3} in Section 3.2 above, with uniform bounds. This is clear, because $\chi(t,s)$ satisfies {\it A1,A2,A3} uniformly with respect to $t,s$ for $|t-s|<\delta$ (possibly for a smaller value of $\delta$): {\it A1,A2} follow from \eqref{stimeflusso}, whereas {\it A3} holds (even in the presence of a magnetic field as in Section 6 below) by \cite[Proposition 2.3']{yajima} applied with $\alpha=\beta=0$.\par
Finally, the symbol $b(\hbar,t,s,\hbar^{1/2}x,\hbar^{1/2}\eta)$ has seminorms in $S^0_{0,0}$ dominated by those of $b(\hbar,t,s,\cdot,\cdot)$. Hence the desired result follows from Theorem \ref{teo3.5}.
\end{proof}
\begin{remark}\label{oss1} With the notation of the previous corollary, notice that
for any $s,\tau,t\in\R$ we have 
\[
\chi^\hbar(t,s)=\chi^\hbar(t,\tau)\circ\chi^\hbar(\tau,s).
\]
Indeed, this follows at once from the case $\hbar=1$ and the definition of $\chi^\hbar$.
\end{remark}
\section{Proof of the main result (Theorem \ref{mainteo})}
 Let us first prove \eqref{uno} and \eqref{due}. We know from Section 2 that the propagator $U(t,s)$ is an oscillatory integral operator of the form \eqref{cinque0} for $|t-s|<\delta$, and therefore by Corollary \ref{prop8} we have $U(t,s)\in FIO_\hbar(\chi^\hbar(t,s))$, with seminorms $\|U(t,s)\|^\hbar_{m,\chi^\hbar(t,s)}$, $m\in\bN$, uniformly bounded with respect to $\hbar,t,s$ for $|t-s|<\delta$. Using the evolution properties of the propagator, Remark \ref{oss1} and Theorem \ref{prodsemi} we see that, for any $T>0$ we have  $U(t,s)\in FIO_\hbar(\chi^\hbar(t,s))$ with seminorms
 \begin{equation}\label{chiave0}
 \|U(t,s)\|^\hbar_{m,\chi^\hbar(t,s)}\leq C_0
 \end{equation}
 for a constant $C_0$ independent of $\hbar,t,s$, for $|t-s|\leq T$ (but depending on $T$); here we are using the fact that constant $C$ in Theorem \ref{prodsemi} can be chosen independent of such parameters, being $\chi^\hbar(t,s)$ uniformly tame. We then deduce \eqref{uno} and \eqref{due} from Corollary \ref{corollario}. \par
We now prove \eqref{tre} and \eqref{quattro}. By arguing as above it suffices to prove that
\begin{equation}\label{daver}
\|E^{(N)}(\Omega,t,s)-U(t,s)\|^\hbar_{m,\chi^\hbar(t,s)}
\leq C(T)\hbar^N \omega(\Omega)^{N+1}|t-s|
\end{equation}
for $|t-s|\leq T$ and for some $m>2d$.\par
 Now, by Corollary \ref{prop8} and \eqref{stimeresto} the operator $G^{(N)}(t,s)$ defined in \eqref{defgn} belongs to $FIO_\hbar(\chi^\hbar(t,s))$, with seminorms 
\begin{equation}\label{34bis}
\|G^{(N)}(t,s)\|^\hbar_{m,\chi^\hbar(t,s)}\leq C_m\hbar^{N+1}|t-s|^{N+1}
\end{equation}
for a constant $C_m$ independent of $\hbar,t,s$, for $|t-s|<\delta$.
On the other hand we have 
\[
R^{(N)}(t,s)f:=E^{(N)}(t,s)f-U(t,s)f=-i\hbar^{-1}\int_s^t U(t,\tau)G^{(N)}(\tau,s)f\,d\tau
\]
so that by \eqref{chiave0}, \eqref{34bis} and Theorem \ref{prodsemi} we obtain
\begin{equation}\label{chiave}
\|R^{(N)}(t,s)\|^\hbar_{m,\chi^\hbar(t,s)}\leq C_m\hbar^{N}|t-s|^{N+2}.
\end{equation}
We then can write
\[
E^{(N)}(\Omega,t,s)-U(t,s)=\big(U(t,t_{L-1})+R^{(N)}(t,t_{L-1})\big)\ldots \big(U(t_1,s)+R^{(N)}(t_1,s)\big)-U(t,s)
\]
and argue as in \cite[Lemma 3.2]{fujiwara2}, applying the composition estimate in Theorem \ref{prodsemi}. We sketch the argument for the benefit of the reader. \par
One expands the above product and obtains a sum of ordered products of operators, where each product has the following structure: {\it from right to left} we have, say, $q_1$ factors of type $U$, $p_1$ factors of type $R^{(N)}$, $q_2$ factors of type $U$, $p_2$ factors of type $R^{(N)}$, etc., up to $q_k$ factors of type $U$, $p_k$ factors of type $R^{(N)}$, to finish with $q_{k+1}$ factors of type $U$. Here $p_1,\ldots,p_k,q_1,\ldots q_k,q_{k+1}$ are non negative integers whose sum is $L$, with $p_j>0$ and we can of course group together the consecutive factors of type $U$, using the evolution property of the propagator. Now, we estimate the seminorm $\|\cdot\|^\hbar_{m,\chi^\hbar(t,s)}$ of such an ordered product, for some fixed $m>2d$, using Theorem \ref{prodsemi} combined with the known estimates for each factor, namely \eqref{chiave0} and \eqref{chiave}, which we rephrase here as
\begin{equation}\label{chiave2}
\|R^{(N)}(t_j,t_{j-1})\|^\hbar_{m,\chi^\hbar(t,s)}\leq \tilde{C}_0\hbar^{N}|t_j-t_{j-1}|^{N+2}.
\end{equation}
If $C$ denotes the constant in Theorem \ref{prodsemi} we then dominate the seminorm $\|\cdot\|^\hbar_{m,\chi^\hbar(t,s)}$ of the above ordered product by
\[
C^{p_1+\ldots+p_k+k}C_0^{k+1} \prod_{j=1}^k \prod_{i=1}^{p_j} \tilde{C}_0\hbar^{N}|t_{J_j+i}-t_{J_{j}+i-1}|^{N+2}
\]
where $J_j=p_1+\ldots+p_{j-1}+q_1+\ldots +q_j$ for $j\geq2$ and $J_1=q_1$. Now, it was proved in the last part of the proof of \cite[Lemma 3.2]{fujiwara2} that the sum over $p_1,\ldots,p_k,q_1,\ldots,q_{k+1}$ of these terms is in turn dominated by 
\[
C(T)\hbar^N \omega(\Omega)^{N+1}|t-s|
\]
for $|t-s|\leq T$. This gives \eqref{daver} and concludes the proof.

\section{The case of magnetic fields}
In this section we extend some of the previous results in the presence of a magnetic field. Hence, consider the Schr\"odinger equation
\begin{equation}\label{equazione}
i\hbar \partial_t u=\frac{1}{2}\big(-i\hbar\nabla-A(t,x)\big)^2 u+V(t,x)u
\end{equation}
where $V(t,x)$ and $A(t,x)=(A_1(t,x),\ldots, A_d(t,x))$ are electric scalar and magnetic vector potential of the field, $t\in \R$, $x\in\rd$.\par
 Assume the following hypothesis (cf. \cite{fujiwara5,tsuchida,yajima}). \par\medskip
{\bf Assumption (B)} {\it \begin{itemize}
\item[a)] For $j=1,\ldots,d$, $A_j(t,x)$ is a real function of $(t,x)\in \R\times\rd$ and $\partial^\alpha_x A_j(t,x)$ is $C^1$ in $(t,x)\in \R\times\rd$, for every $\alpha\in\bN^d$. Moreover there exists $\epsilon>0$ such that
\[
|\partial^\alpha_x B(t,x)|\leq C_\alpha(1+|x|)^{-1-\epsilon},\quad|\alpha|\geq 1
\]
\[
|\partial^\alpha_x A(t,x)|+|\partial^\alpha_x\partial_t A(t,x)|\leq C_\alpha,\quad|\alpha|\geq 1, 
\]
for $(t,x)\in \R\times\rd$, where $B(t,x)$ is the magnetic field, i.e.\ the skew-symmetric matrix with entries $B_{j,k}(t,x)=(\partial A_k/\partial x_j-\partial A_j/\partial x_k)(t,x)$.\par\smallskip
\item[b)] $V(t,x)$ is a real function of $(t,x)\in \R\times\rd$ with $\partial^\alpha_x V(t,x)$ continuous in $(t,x)\in \R\times \rd$, for every $\alpha\in\bN^d$ and satisfying
\[
|\partial^\alpha_x V(t,x)|\leq C_\alpha,\quad|\alpha|\geq 2,\ (t,x)\in \R\times \rd.
\]
\end{itemize}}
It was proved in \cite[Sections 2,3]{yajima} that if $V=0$, under Assumption (B), all the short time results about the propagator, as well as the construction of the operators $E^{(N)}(t,s)$, $G^{(N)}(t,s)$ summarized in Section 2 continue to hold if $N\geq 1$. To be precise, the same formulas for $E^{(N)}(t,s)$, $G^{(N)}(t,s)$ and $E^{(N)}(\Omega,t,s)$ hold, where now the amplitudes $a_k$ satisfy the transport equations
\begin{multline}\label{trasporto2}
\frac{\partial a_k}{\partial t}+\sum_{j=1}^d\Big(\frac{\partial S}{\partial x_j}(t,s,x,y)-A_j(t,x)\Big)\frac{\partial a_k}{\partial x_j}\\
+\frac{1}{2}\Big(\Delta_x S(t,s,x,y)-\frac{d}{t-s}-{\rm div}_x\, A(t,x)\Big)a_k=\frac{1}{2}\Delta_x a_{k-1},
\end{multline}
with $a_{0}(t,s,x,y)\equiv0$ and the initial conditions $a_1(s,s,x,y)=1$ and $a_k(s,s,x,y)=0$ for $k=1,2,\ldots$ (these transport equations hold even when the electric potential is present\footnote{Indeed, since $S$ satisfies the Hamilton-Jacobi equation $\partial_t S+\frac{1}{2}(\nabla_x S-A)^2+V=0$, we have 
\begin{multline*}
\Big(i\hbar\partial_t-\frac{1}{2}(-i\hbar\nabla_x-A)^2-V(t,x)\Big)\frac{1}{(2\pi i \hbar(t-s))^{d/2}}\int_{\rd} e^{i\hbar^{-1}S(t,s,x,y)} a(x,y)\, dy\\
=\frac{1}{(2\pi i \hbar(t-s))^{d/2}}\int_{\rd} e^{i\hbar^{-1}S(t,s,x,y)}i\hbar\Big[\partial_t a+(\nabla_x S-A)\cdot\nabla_x a+\frac{1}{2}(\Delta_x S-\frac{d}{t-s}-{\rm div}_x\,A)-\frac{i\hbar}{2}\Delta_x a \Big]\, dy.
\end{multline*}
}). 
The key estimates \eqref{quattro0}--\eqref{cinque1} are valid too, for $N\geq 1$. \par
We have therefore the following result.
\begin{theorem}\label{magnetic}
Under the above Assumption (B), the conclusions of Theorem \ref{mainteo} still hold, at least for $N\geq 1$.
\end{theorem}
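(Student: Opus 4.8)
The plan is to observe that the proof of Theorem~\ref{mainteo} carried out in Sections 4 and 5 never used the specific form of the electric Schr\"odinger operator: it relied only on the structural facts collected in Section 2 --- the oscillatory integral representation \eqref{cinque0} of the short-time propagator with amplitude bound \eqref{cinque1}, the parametrix identity with remainder estimate \eqref{stimeresto} and amplitude bound \eqref{quattro0}, the generating-function estimates \eqref{pre1}, \eqref{pre2}, the tameness (properties A1--A3) of the classical flow $\chi(t,s)$, and the cocycle property of that flow --- together with the equation-independent phase-space calculus of Section 3. Hence it suffices to check that all of these inputs remain available under Assumption (B), now with the magnetic classical Hamiltonian $H(t,x,\xi)=\tfrac12(\xi-A(t,x))^2+V(t,x)$ and $S(t,s,x,y)$ its two-point action; and this is exactly what Yajima established in \cite[Sections 2,3]{yajima} (see also \cite{fujiwara5,tsuchida}), valid for $N\geq1$, the extension to $V\not\equiv 0$ as in Assumption (B)(b) being recorded via the transport equations \eqref{trasporto2} and the footnote above.

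Concretely, I would first recall from \cite{yajima} that under Assumption (B): the magnetic flow $\chi(t,s)$ is a symplectomorphism satisfying A1, A2 uniformly for $|t-s|\leq T$ and A3 uniformly for $|t-s|<\delta$ (by \cite[Proposition 2.3']{yajima} with $\alpha=\beta=0$, as already noted in Section 4); the magnetic action $S(t,s,x,y)$ is well defined for $|t-s|<\delta$ and satisfies \eqref{pre1}, \eqref{pre2}; the parametrices $E^{(N)}(t,s)$, $N\geq1$, built from \eqref{trasporto2}, have amplitudes $a_k\in S^0_{0,0}$ obeying \eqref{quattro0}, the remainders $G^{(N)}(t,s)$ obey \eqref{stimeresto}, and $U(t,s)$ admits for $|t-s|<\delta$ the form \eqref{cinque0} with amplitude satisfying \eqref{cinque1}. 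The restriction $N\geq1$ enters here, and nowhere else: when $A\not\equiv0$ the zeroth parametrix $E^{(0)}(t,s)$ of \eqref{ezero}, whose amplitude is identically $1$, no longer produces a remainder with the decay \eqref{stimeresto}, because the magnetic terms in \eqref{trasporto2} force the first transport coefficient $a_1$ to be nonconstant; one therefore must start the expansion at $N=1$.

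With these facts in hand the rest of the argument is copied verbatim. Proposition~\ref{prop7} and Corollary~\ref{prop8} apply as stated, their hypotheses being only \eqref{pre1}, \eqref{pre2}, membership of the amplitude in $S^0_{0,0}$, and tameness of $\chi$; so $U(t,s)$ (short time) and $E^{(N)}(t,s)$, $G^{(N)}(t,s)$ ($N\geq1$, short time) lie in $FIO_\hbar(\chi^\hbar(t,s))$ with seminorms bounded uniformly for $|t-s|<\delta$, $0<\hbar\leq1$, where $\chi^\hbar(t,s)(y,\eta)=\hbar^{-1/2}\chi(t,s,\hbar^{1/2}y,\hbar^{1/2}\eta)$ is the rescaled magnetic flow. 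The cocycle identity of Remark~\ref{oss1} holds for the magnetic flow as well (it is a general property of Hamiltonian flows), so Theorem~\ref{prodsemi} upgrades this to $U(t,s)\in FIO_\hbar(\chi^\hbar(t,s))$ with the uniform bound \eqref{chiave0} for all $|t-s|\leq T$, and Corollary~\ref{corollario} then yields \eqref{uno} and \eqref{due}. For part (b), Duhamel's formula for the magnetic equation --- justified by the parametrix identity in the footnote of this section --- gives $R^{(N)}(t,s)=E^{(N)}(t,s)-U(t,s)=-i\hbar^{-1}\int_s^t U(t,\tau)G^{(N)}(\tau,s)\,d\tau$, whence \eqref{chiave} follows from \eqref{34bis}, \eqref{chiave0} and Theorem~\ref{prodsemi}, and the telescoping-product estimate borrowed from \cite[Lemma 3.2]{fujiwara2} in Section 5 delivers \eqref{daver}, hence \eqref{tre} and \eqref{quattro}.

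The only genuine obstacle is the bookkeeping point just flagged: one must identify clearly why $N=0$ is excluded and carry the $N\geq1$ restriction consistently through the composition argument --- the telescoping expansion now involves only the operators of type $U$ and $R^{(N)}$ with $N\geq1$, which is precisely the situation already handled in Section 5. Everything else is mechanical, since the phase-space machinery of Section 3 is insensitive to the form of the Hamiltonian and all equation-specific ingredients are quoted from \cite{yajima}. (A harder, and separate, problem --- in the spirit of the last paragraph of the Introduction --- would be to re-establish \eqref{pre1}, \eqref{pre2} and the tameness of $\chi$ for rough magnetic potentials; this is outside the present scope.)
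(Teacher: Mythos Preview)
Your proposal is correct in outline and captures the essential idea: the proof of Theorem~\ref{mainteo} depends only on the structural ingredients \eqref{pre1}--\eqref{cinque1} and the tameness of the flow, all of which survive in the magnetic setting for $N\geq 1$. However, the paper organizes the case $V\not\equiv 0$ differently from what you do. Rather than asserting that Yajima's estimates extend directly to the full Hamiltonian $\tfrac12(\xi-A)^2+V$ --- which is plausible but is not quite what is ``recorded via the transport equations \eqref{trasporto2} and the footnote,'' since those only verify the \emph{form} of the parametrix identity, not the estimates on $S$, $\chi$, $a_k$, $g_N$, $b$ --- the paper first cites \cite{yajima} for the pure magnetic case $V=0$ (where the argument is exactly yours), and then removes $V$ by the gauge transformation $\mathcal{G}(t)=\exp\big(i\hbar^{-1}\int_0^t V(\tau,\cdot)\,d\tau\big)$. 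One checks that $\mathcal{G}(t)\in FIO_\hbar(\tilde\chi^\hbar)$ for a tame $\tilde\chi$, that the gauge change preserves the flow and leaves the transport equations \eqref{trasporto2} (hence the amplitudes $a_k$) invariant, so that $E^{(N)}(\Omega,t,s)-U(t,s)=\mathcal{G}(t)^{-1}\big(E^{(N)'}(\Omega,t,s)-U'(t,s)\big)\mathcal{G}(s)$; Theorem~\ref{prodsemi} then transfers \eqref{daver} from the primed ($V=0$) operators to the unprimed ones. Your direct approach is slightly more economical if one is willing to re-derive the extension of the short-time analysis to $V\neq 0$; the paper's gauge reduction is cleaner in that it cites the literature exactly as stated and isolates a single additional ingredient --- membership of $\mathcal{G}(t)$ in the $FIO_\hbar$ class --- that needs checking.
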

\begin{proof}
As we have already observed, the results in Section 2 keep valid under the Assumption (B), if $V=0$ and $N\geq 1$, therefore the whole subsequent analysis applies, giving the desired result if $V=0$. \par
Consider now the case when a potential $V(t,x)$ is present. Let us prove just part b) of Theorem \ref{mainteo} (the first part is similar and easier). We apply the gauge transformation
\[
\mathcal{G}(t)u(t,x)=\exp(i\hbar^{-1}G(t,x))u(t,x)
\]
which maps the solution of the equation with potentials $A,V$ into the solution of the equation with potentials $A+\nabla_xG$, $V-\partial_t G$, cf.\ \cite{yajima}. More precisely, if we denote by $U(t,s)$ and $U'(t,s)$ the propagators, with the prime denoting that of the new equation, we have 
\[
U(t,s)=\mathcal{G}(t)^{-1}U'(t,s)\mathcal{G}(s).
\]
In particular if we choose $G(t,x)=\int_0^t V(\tau,x)\, d\tau$, in the new equation the electric potential is absent and and magnetic potential is $A'=A+\int_0^t\partial_x V(\tau,x)d\tau$, still satisfying the above Assumption (B). Since the Lagrangian function $L(x,v,t)=\frac{1}{2}|v|^2+A(t,x)\cdot v-V(t,x)$ changes to $L'=L+dG(t,x(t))/dt$, the flow remains the same and the new generating function is 
\[
S'(t,s,x,y)=S(t,s,x,y)+\int_0^t V(\tau,x)\, d\tau-\int_0^s V(\tau,y)\, d\tau
\]
where $S(t,s,x,y)$ is the generating function for the original equation. The amplitudes $a_k$ for the new equation are the same as those for the original one, because the transport equations  \eqref{trasporto2} do not change under the substitution $A\to A+\int_0^t\nabla_x V(\tau,x)\,d\tau$, $S\to S'$, $V\to 0$. As a consequence we have
\[
E^{(N)}(t,s)=\mathcal{G}^{-1}(t)E^{(N)'}(t,s)\mathcal{G}(s)
\]
and therefore 
\[
E^{(N)}(\Omega,t,s)=\mathcal{G}^{-1}(t)E^{(N)'}(\Omega,t,s)\mathcal{G}(s)
\]
where the the primes denote the corresponding operators for the new equation. As a consequence
\[
E^{(N)}(\Omega,t,s)-U(t,s)=\mathcal{G}^{-1}(t)\Big(E^{(N)'}(t,s)-U'(t,s)\Big)\mathcal{G}(s).
\]
Now, we know that \eqref{daver} holds for the difference $E^{(N)'}(\Omega,t,s)-U'(t,s)$ and we want to prove that the same is true for $E^{(N)}(\Omega,t,s)-U(t,s)$.  Using Theorem \ref{prodsemi}, it is then sufficient to prove that $\mathcal{G}(t)\in FIO_\hbar(\tilde{\chi}^\hbar)$ for some tame canonical transformation $\tilde{\chi}$. This follows by Corollary \ref{prop8}, because $\mathcal{G}(t)$ can be written in the form \eqref{fio1} with $\Phi(t,x,\eta)=x\eta+\int_0^t V(\tau,x)\,d\tau$, $b\equiv1$, and 
\[
\tilde{\chi}(y,\eta)=\Big(y, \eta+\int_0^t \nabla_x V(\tau,x)\, d\tau\Big).
\]
\end{proof}
\begin{remark}
We do not know whether the conclusion of Theorem \ref{mainteo} holds for $N=0$ in the presence of a magnetic field as above (in fact, in \cite[Theorem 5]{yajima} the case $N=0$ is excluded as well). Indeed, the approach in \cite{fujiwara2} requires very precise estimates on the generating function $S(t,s,x,y)$, which seem to be quite tricky when a magnetic field is present. 
\end{remark}

\section{Concluding remarks}
\subsection{Sharpness of the results}
In general we cannot avoid the dichotomy $p>2$, $p<2$ in Theorem \ref{mainteo}. Consider, for example, the case of the harmonic oscillator, 
\begin{equation}\label{arosc}
i\partial_t u=-\frac{1}{2}\Delta u+\frac{1}{2}|x|^2 u
\end{equation}
(we take $\hbar=1$ for simplicity, so that $\hlp=L^p_k$). By the Mehler formula (see e.g. \cite[formula (5a)]{rod}), the propagator $U(t,0)$ at $t=\pi/2$ is (up to constant factor) just the Fourier transform $f\mapsto \widehat{f}$. Now, we have the following result.
\begin{proposition}
Let $1<p<\infty$, $k_1,k_2\in\R$. The Fourier transform maps $L^p_{k_1}\to L^p_{k_2}$ continuously if and only if 
\[
k_1\geq 2d(1/p-1/2)\ \ {\rm and}\ \ k_2\leq 0,\quad {\rm if}\ 1<p\leq 2
\]
\[
k_1\geq 0\ \ {\rm and}\ \ k_2\leq -2d(1/2-1/p),\quad {\rm if}\ 2\leq p< \infty.
\]
\end{proposition}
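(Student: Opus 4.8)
The plan is to prove the equivalence for $1<p\le 2$ and to recover the range $2\le p<\infty$ by duality. Indeed, with respect to the standard sesquilinear pairing the adjoint of $\mathcal{F}$ is $R\mathcal{F}$ (up to a multiplicative constant), where $Rf(x)=f(-x)$; since $R$ is an isometry of every $L^q_s(\rd)$ and $(L^p_k)^\ast=L^{p'}_{-k}$, the operator $\mathcal{F}:L^p_{k_1}\to L^p_{k_2}$ is bounded if and only if $\mathcal{F}:L^{p'}_{-k_2}\to L^{p'}_{-k_1}$ is bounded. Because $\tfrac12-\tfrac1{p'}=\tfrac1p-\tfrac12$, the two systems of inequalities in the statement are exchanged by $(p,k_1,k_2)\mapsto(p',-k_2,-k_1)$, so it is enough to settle the case $1<p\le 2$ (which includes $p=2$, where $k_p=0$).

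\textbf{Sufficiency.} Let $1<p\le 2$ and assume $k_1\ge k_p$ and $k_2\le 0$. Then $L^p_{k_1}\hookrightarrow L^p_{k_p}$ and $L^p\hookrightarrow L^p_{k_2}$, so it suffices to prove that $\mathcal{F}:L^p_{k_p}\to L^p$ is bounded. This is immediate from the embeddings $L^p_{k_p}\hookrightarrow M^p$ and $M^p\hookrightarrow L^p$ of Theorem \ref{proimmersioni}, once one knows that $\mathcal{F}$ is bounded (in fact an isomorphism) on $M^p=M^{p,p}$. The latter is classical: from the STFT covariance $|V_{\widehat g}\,\widehat f(\xi,-x)|=|V_g f(x,\xi)|$ and the invariance of the $L^p(\rdd)$-norm under the change of variables $(x,\xi)\mapsto(\xi,x)$ (here we use $p=q$) one gets $\|\widehat f\|_{M^p}=\|f\|_{M^p}$; see e.g.\ \cite[Ch.~11]{book}. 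Hence $\mathcal{F}:L^p_{k_1}\hookrightarrow L^p_{k_p}\hookrightarrow M^p\xrightarrow{\mathcal F}M^p\hookrightarrow L^p\hookrightarrow L^p_{k_2}$.

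\textbf{Necessity.} Suppose $\|\widehat f\|_{L^p_{k_2}}\le C\|f\|_{L^p_{k_1}}$ for all $f\in\cS(\rd)$, with $1<p\le 2$, and fix a nonzero $f\in\cS(\rd)$ with $\widehat{\widehat f}\in C_c^\infty(\rd)$. Testing first against the translates $f_\lambda(x)=f(x-\lambda e_1)$, $\lambda\to+\infty$: the right-hand side is constant, $\|f_\lambda\|_{L^p_{k_1}}=\|f\|_{L^p_{k_1}}$, since $(1-\Delta)^{k_1/2}$ and the $L^p$-norm commute with translations; on the left, conjugating the Bessel potential by the modulation $\widehat{f_\lambda}(\xi)=e^{-i\lambda\xi_1}\widehat f(\xi)$ turns it into the Fourier multiplier with symbol $\langle\zeta-\lambda e_1\rangle^{k_2}$ applied to $\widehat f$, and since $\widehat{\widehat f}$ is compactly supported, on its support $\langle\zeta-\lambda e_1\rangle^{k_2}=\lambda^{k_2}(1+o(1))$ uniformly together with all derivatives, so a dominated-convergence argument gives $\|\widehat{f_\lambda}\|_{L^p_{k_2}}=c\,\lambda^{k_2}\|\widehat f\|_{L^p}(1+o(1))$; hence $k_2\le 0$. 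Testing next against the $L^2$-concentrating dilates $f_\lambda(x)=f(\lambda x)$, $\lambda\to+\infty$: using the exact identity $(1-\Delta)^{s/2}[f(\lambda\,\cdot)]=[(1-\lambda^2\Delta)^{s/2}f](\lambda\,\cdot)$ together with the fact that $(1+\lambda^2|\eta|^2)^{s/2}\big/\big(\lambda^{\max(s,0)}\langle\eta\rangle^{\max(s,0)}\big)$ is a Mikhlin--H\"ormander multiplier with bounds uniform in $\lambda\ge1$, one obtains $\|f_\lambda\|_{L^p_{k_1}}\lesssim\lambda^{\max(k_1,0)-d/p}$; the same identity for $\widehat{f_\lambda}(\xi)=\lambda^{-d}\widehat f(\xi/\lambda)$, together with $(1+\lambda^{-2}|\eta|^2)^{k_2/2}\to 1$ (in the relevant multiplier sense), gives $\|\widehat{f_\lambda}\|_{L^p_{k_2}}\gtrsim\lambda^{-d/p'}$ for $\lambda$ large. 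The assumed inequality then forces $\max(k_1,0)-d/p\ge-d/p'$, i.e.\ $\max(k_1,0)\ge d/p-d/p'=k_p$; as $k_p>0$ when $p<2$, this yields $k_1\ge k_p$. (For $p=2$ this dilation test degenerates, $k_p=0$, and the remaining requirement $k_1\ge0$ is instead read off from modulated bumps $\widehat{f_\lambda}(\cdot)=g(\cdot-\lambda e_1)$, $g\in C_c^\infty(\rd)$, exactly as in the elementary Plancherel argument.)

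\textbf{Main obstacle.} The soft parts are the duality reduction, the embedding chain for sufficiency, and the translation test for $k_2\le 0$. The delicate point is the sharp $\lambda$-asymptotics of $\|(1-\Delta)^{s/2}f_\lambda\|_{L^p}$ for the concentrating dilates: unlike the $L^2$ case, Plancherel is not available, so one must combine the scaling identity above with uniform-in-$\lambda$ Mikhlin bounds for the rescaled Bessel multipliers to get the upper bound, and a Fourier-multiplier (dominated-)convergence argument to get the matching lower bound for $\|\widehat{f_\lambda}\|_{L^p_{k_2}}$.
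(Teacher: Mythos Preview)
Your argument is correct, and it shares the overall architecture with the paper's proof (duality reduction; translation test for the ``easy'' exponent condition), but two steps are handled differently.

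For \emph{sufficiency}, the paper proves $\mathcal F:L^p_{k_p}\to L^p$ for $1<p\le 2$ directly from the Hardy--Littlewood--Paley inequality $\||\xi|^{-k_p}\widehat f\|_{L^p}\lesssim\|f\|_{L^p}$ (and notes that it also follows from Theorem~\ref{mainteo} applied to the harmonic oscillator at $t=\pi/2$). Your route through Theorem~\ref{proimmersioni} and the invariance of $M^p$ under $\mathcal F$ is equally valid and closer to the paper's phase-space philosophy; the trade-off is that it imports the nontrivial Kobayashi--Sugimoto embeddings rather than a one-line classical inequality.

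For the \emph{necessity of the sharp exponent}, the paper works on the dual side $p\ge2$: it restricts to test functions with $\widehat f$ supported in the unit ball, invokes Bernstein's inequality to replace $\|f\|_{L^p_{k_1}}$ by $\|f\|_{L^p}$ outright, and then tests on the spreading dilates $f(\cdot/\lambda)$, using only dominated convergence and Fatou on the target side. You instead stay on the $p\le2$ side and test on concentrating dilates $f(\lambda\cdot)$, which forces you to control $\|f_\lambda\|_{L^p_{k_1}}$ from above via a uniform Mikhlin--H\"ormander bound for the family $((\lambda^{-2}+|\eta|^2)/(1+|\eta|^2))^{k_1/2}$. This is correct (the symbol estimates do hold uniformly in $\lambda\ge1$), but the paper's Bernstein trick is cleaner: it eliminates any multiplier analysis on the source side and isolates the scaling entirely on the target. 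A minor point: in your translation test you assert $L^p$-norm asymptotics $\|\widehat{f_\lambda}\|_{L^p_{k_2}}=c\lambda^{k_2}(1+o(1))$; for the argument only the lower bound is needed, and that follows more cheaply from pointwise convergence and Fatou, which is exactly what the paper does.
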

\begin{proof}
{\it Sufficient conditions}. 
The desired estimates follow by the inclusion relations of Sobolev spaces and duality if we prove the continuity of the Fourier transform $L^p_{k}\to L^p$ with $k=2d(1/p-1/2)$, for $1<p\leq 2$. This is equivalent to the estimate
\[
\| (1+|\xi|^2)^{-k/2}\widehat{f}(\xi)\|_{L^p}\leq C\|f\|_{L^p},
\]
which in turn is a consequence of the Hardy-Littlewood-Paley inequality \cite[Theorem 1.4.1]{bl} (that is the same estimate with $(1+|\xi|^2)^{-k/2}$ replaced by $|\xi|^{-k}$ in the left-hand side).\par
Of course the result also follows from Theorem \ref{mainteo} applied to the equation \eqref{arosc}, with $s=0$, $t=\pi/2$ ($\hbar=1$).\par
{\it Necessary conditions}. Let us first prove the condition on $k_2$ when $2\leq p<\infty$. By duality this will give the condition on $k_1$ for $1<p\leq 2$ as well.\par
Consider the space 
\[
\mathcal{A}=\{f\in\cS(\rd):\  \widehat{f}(\xi)=0\ {\rm for}\ |\xi|\geq 1\}.
\]
By Bernstein inequalities (see e.g. \cite[Proposition 5.3]{wolff}) there exists a constant $C>0$ such that
\[
\|f\|_{L^p_{k_1}}\leq C\|f\|_{L^p} \quad \forall f\in\mathcal{A},
\]
even for $k_1< 0$ (for $k_1\geq0$ this is trivially true for every $f\in\cS(\rd)$). \par
Hence, if the Fourier transform is bounded  $L^p_{k_1}\to L^p_{k_2}$, we have 
\begin{equation}\label{app1}
\|\widehat{f}\|_{L^p_{k_2}}\leq C\|f\|_{L^p}\quad \forall f\in\mathcal{A}.
\end{equation}
Now, suppose by contradiction that $k_2>-2d(1/2-1/p)$. Fix $f\in\mathcal{A}\setminus\{0\}$ and test this estimate on $f(x/\lambda)$, with $\lambda\geq 1$, so that $f(x/\lambda)$ belongs to $\mathcal{A}$ too.
We have 
\begin{equation}\label{41bis}
\|(1-\Delta)^{k_2/2}\widehat{f(\cdot /\lambda)}\|_{L^p}\leq C\|f(\cdot /\lambda)\|_{L^p}
\end{equation}
where
\[
(1-\Delta)^{k_2/2}\widehat{f(\cdot /\lambda)}=\lambda^{d+k_2}[(\lambda^{-2}-\Delta)^{k_2/2}\widehat{f}](\lambda\cdot).
\]
By dominated convergence theorem ($|x|^{k_2}$ is locally integrable, because $k_2>-d$) we have $[(\lambda^{-2}-\Delta)^{k_2/2}\widehat{f}](x)\to |D|^{k_2}\widehat{f}(x)$ for every $x\in\rd$, and therefore by the Fatou theorem,
\[
0\not=\||D|^{k_2}\widehat{f}\|_{L^p}\leq\liminf_{\lambda\to+\infty}\|(\lambda^{-2}-\Delta)^{k_2/2}\widehat{f}\|_{L^p}.
\]
Hence, letting $\lambda\to+\infty$ in \eqref{41bis} we obtain $k_2\leq -2d(1/2-1/p)$, which is a contradiction.\par
Let us now prove that $k_2\leq 0$ if $1<p\leq2$. This will give the condition on $k_1$ for $2\leq p<\infty$ as well.\par
Fix $f\in\cS(\rd)\setminus\{0\}$, and let $f_\lambda(x)=f(x_1-\lambda,x_2,\ldots,x_d)$, $\lambda>0$. Suppose the following estimate holds:
\begin{equation}\label{app2}
\|\widehat{f_\lambda}\|_{L^p_{k_2}}\leq C\|f_\lambda\|_{L^p_{k_1}}= C\|f\|_{L^p}.
\end{equation}
We have 
\begin{align*}
(1&-\Delta)^{k_2/2}\widehat{f_\lambda}(\xi)=(1-\Delta)^{k_2/2}[ e^{-i\lambda\xi_1}\widehat{f}(\xi)]\\
&=\int_{\rd} e^{ix\xi}(1+|x|^2)^{k_2/2}{f}(-(x_1+\lambda),-x_2,\ldots,-x_d)\,dx\\
&=\lambda^{k_2}e^{-i\lambda \xi_1}\int_{\rd} e^{-iy\xi}(\lambda^{-2}+|1+\lambda^{-1}y_1|^2+\lambda^{-2}|y_2|^2+\ldots+\lambda^{-2}|y_d|^2)^{k_2/2}{f}(y)\, dy.
\end{align*}
As $\lambda\to+\infty$ the last integral  converges to $\widehat{f}(\xi)$, for every $\xi\in \rd$, by the dominated convergence theorem. One can then conclude as above, by letting $\lambda\to+\infty$ in \eqref{app2} and using Fatou theorem. 

\end{proof}
\subsection{The case of rough potentials}
Here we present an extension of the first part of Theorem \ref{mainteo} to a class of potentials in the so-called Sj\"ostrand class
\[
M^{\infty,1}(\rd)=\{f\in\cS'(\rd): \|f\|_{M^{\infty,1}(\rd)}:=\int_{\rd}\|V_g f(\cdot,\xi)\|_{L^\infty}\, d\xi<+\infty\}
\]
where, as usual, $g\in\cS(\rd)\setminus\{0\}$.\par
Functions in this space are bounded in $\rd$ and locally have the mild regularity of a function whose Fourier transform is in $L^1$. Hence, it is not required the existence of any derivative. We refer to the books \cite{book,deGosson} for a detailed analysis of this function space and its role in Time-frequency Analysis and Mathematical Physics. Here we just need the notion of narrow continuity for a one-parameter family $t\mapsto f_t$ of functions in this space. In fact, it is quite a weak notion of continuity.\par
\begin{definition} Let $I\subset\R$ be an interval. We say that a map $I\ni t\mapsto f_t\in M^{\infty,1}(\rd)$ is continuous for the narrow convergence if it is continuous in $\cS'(\rd)$ (weakly) and if there exists a function $H\in L^1(\rd)$ such that $\sup_{x\in \rd}|V_g f_t (x,\xi)|\leq H(\xi)$ for every $t\in I$ and almost every $\xi\in\rd$. 
\end{definition}
Consider now the equation (with $\hbar=1$, for simplicity)
\[
i\partial_t u=(-\Delta)^{\kappa/2}u+V_2(t,x) u+V_1(t,x) u+V_0(t,x) u\\
\]
with $0<\kappa\leq 2$, $t\in\R$, $x\in\rd$. Suppose 
\[
\partial^\alpha_x V_j(t,\cdot)\in M^{\infty,1}(\rd)\ {\rm for}\ |\alpha|=j,\ j=0,1,2,
\]
with $V_2$ and $V_1$ real-valued. Assume moreover that the map $\R\ni t\mapsto \partial^\alpha_x V_j(t,\cdot)\in M^{\infty,1}(\rd)$ for $|\alpha|=j$, $j=0,1,2$, are continuous in the sense of the narrow convergence.\par
Under this assumption the propagator $U(t,s)$ was constructed in \cite{cnr10} (see also \cite{cgnr2,cnrmr} and \cite{sogge} for a similar class of potentials having Fourier transform in $L^1$) and shown to be bounded $M^p\to M^p$ for every $1\leq p\leq\infty$. As a consequence of Theorem \ref{proimmersioni} we then see that $U(t,s)$ enjoys the continuity property
\[
U(t,s): L^p_k\to L^p,\quad 1<p\leq 2,\quad U(t,s): L^p\to L^p_{-k},\quad 2\leq p<\infty
\]
with $k=k_p=2d|1/2-1/p|$.
\section*{Acknowledgment}
We would like to thank Professors Elena Cordero for several discussions and Kenji Yajima for correspondence about his paper \cite{yajima}.

\end{document}